\newcolumntype{L}[1]{>{\raggedright\let\newline\\\arraybackslash\hspace{0pt}}m{#1}}
\newcolumntype{C}[1]{>{\centering\let\newline\\\arraybackslash\hspace{0pt}}m{#1}}
\newcolumntype{R}[1]{>{\raggedleft\let\newline\\\arraybackslash\hspace{0pt}}m{#1}}
\def\ps@pprintTitle{
\let\@oddhead\@empty
 \let\@evenhead\@empty
 \def\@oddfoot{}
 \let\@evenfoot\@oddfoot}
\let\oldlt\longtable
\let\endoldlt\endlongtable
\def\longtable{\@ifnextchar[\longtable@i \longtable@ii}
\def\longtable@i[#1]{\begin{figure}[t]
\onecolumn
\begin{minipage}{0.5\textwidth}
\oldlt[#1]
}
\def\longtable@ii{\begin{figure}[t]
\onecolumn
\begin{minipage}{0.5\textwidth}
\oldlt
}
\def\endlongtable{\endoldlt
\end{minipage}
\twocolumn
\end{figure}}
\newtheorem*{theorem*}{Theorem}
\newtheorem*{propo*}{Proposition}
\newtheorem*{rem*}{Remark}
\journal{Economic Modelling}
\begin{document}
\begin{frontmatter}
\title{
Multifactor CES General Equilibrium: Models and Applications
}
\author{Jiyoung Kim, Satoshi Nakano and Kazuhiko Nishimura}
\cortext[cor1]{Email (\today)}

\begin{abstract}
Sector specific multifactor CES elasticities of substitution and the corresponding productivity growths are jointly measured by regressing the growths of factor-wise cost shares against the growths of factor prices.
We use linked input-output tables for Japan and the Republic of Korea as the data source for factor price and cost shares in two temporally distant states. 
We then construct a multi-sectoral general equilibrium model using the system of estimated CES unit cost functions, and evaluate the economy-wide propagation of an exogenous productivity stimuli, in terms of welfare.
Further, we examine the differences between models based on \textit{a priori} elasticities such as Leontief and Cobb-Douglas.
\end{abstract}

\begin{keyword}
Multifactor CES \sep Productivity Growth \sep Elasticity of Substitution
\sep General Equilibrium \sep Linked Input-Output Tables
\end{keyword}
\end{frontmatter}

\section{Introduction}
In this study, we measure the multifactor CES elasticity of substitution, jointly with the productivity growth, for multiple industrial sectors, by way of two temporally distant cross-sectional data (i.e., linked input--output tables). 
As we learn the multifactor CES unit cost function, we discover that an industry specific elasticity can be estimated by regressing the growth of factor-wise cost shares against the growth of factor-wise prices.
We also discover that the industry specific productivity growth can be measured via the intercept of the regression line.  
Consequently, we make use of the linked input--output tables in order to observe the cost shares and the price changes spanning over two periods for multiple industrial sectors.

The two-input constant elasticity of substitution (CES) function was first introduced by \citet{acms}, and \citet{uzawa} and \citet{mcfadden_ces} later showed that elasticities were still unique for the case of more than two factor inputs.
Empirical analyses concerning the measurement of CES elasticities \citep[e.g.,][]{McKW, vw, KS_2015} have been 
based upon time series data, while embedding nest structures into the two-input CES framework conforming to the work by \citet{sato}, to handle elasticities between more than two factors of production.
The number of factors and thus of estimable elasticities, can nevertheless be narrowed depending on the availability of time series data. 
Since we are interested in constructing a multisector general equilibrium model that calls for multifactor production functions, we take the advantage of an alternative approach, exploiting cross-sectional data.

When a multisectoral general equilibrium model is established, assessments can be made of the arbitrary productivity shock resulting from technological innovation, in terms of welfare gained.
Previous studies in this regard have assumed a constant and uniform unit elasticity \citep{klein}, or have used empirically estimated elasticities in Translog or multistage (nested) CES functions with a highly aggregated and thus limited number of substitutable factors.
Examples include works by \citet{kyj}, \citet{st}, and \citet{tokutsu}, and many of the works concerning CGE models such as studies by \citet{cge_ere} and \citet{go}.
In contrast, our approach allows us to construct an empirical model of multifactor production with different elasticities of substitution among many (over 350) industrial sectors.
Moreover, this approach allows us to prospectively portray the ex post technological structure following any given exogenous productivity shock and to account for welfare in terms of economy-wide input--output performances.

We measure the welfare changes attributed to the exogenous productivity change by SCS (social cost saved), i.e., the difference in the total primary factor inputs required to net produce a fixed amount of final consumption, given the productivity change.
We find in theory that SCS will be positive (primary factor inputs will always be saved) in every sector if the exogenous productivity is improving, and vice-versa, under the system with uniform CES elasticity less than unity which is inclusive of Cobb--Douglas and Leontief systems.
Hence conversely, such a law may not necessarily hold for the case of CES system with non-uniform elasticities; and this is verified by the empirical analysis of SCS using the estimated multifactor CES system.

The remainder of this paper is organized as follows. 
In the next section, we introduce the basics of multifactor CES elasticity and productivity growth estimation and apply the protocol to linked input--output tables for Japan and the Republic of Korea having sufficient capacity as far as degrees of freedom of the regression.
In Section 3, we replicate the current technological structure as the general equilibrium state of a system of empirically estimated multifactor CES functions; further, we trace out how that structure is transformed by exogenous productivity stimuli. 
Section 4 provides concluding remarks.
 
\section{The Model}
\subsection{Multifactor CES Functions}
A constant-returns multifactor CES production function of an industrial sector (index $j$ omitted) has the following form:
\begin{align*}
y = z f\left( \mathbf{x} \right) = z \left( \sum_{i=0}^n {{\lambda}}_i^{\frac{1}{\sigma}} x_i^{\frac{\sigma - 1}{\sigma}}  \right)^{\frac{\sigma}{\sigma-1}}
\end{align*}
where, $y$ denotes the output and $x_i$ denotes the $i$th factor input.
Here, the share parameters are assumed to maintain ${\lambda}_i > 0$ and $\sum_{i} {\lambda}_i = 1$, while the elasticity of substitution $\sigma \geq 0$ is subject to estimation. 
Also, we are interested in measuring the growth of productivity i.e., $\Delta \ln z$, where $\Delta$ represents temporally distant differences.

Displayed below is the unit cost function compatible with the multifactor CES production function:
\begin{align*}
c=z^{-1}h\left( \mathbf{w} \right)
= \frac{1}{z} \left( \sum_{i=0}^n {\lambda}_i w_i^{\gamma}  \right)^{{1}/{\gamma}}  
\end{align*}
where, $c$ denotes the unit cost of the output, and $w_i$ denotes the $i$th factor price.
Here, we use $\gamma = 1-\sigma$ for convenience.
The cost share of the $i$th input $a_i$ can be determined, in regard to Shephard's lemma, by differentiating the unit cost function:
\begin{align} 
{a}_{i}=\frac{\partial c}{\partial w_i}\frac{w_i}{c} = {\lambda}_i \left( z c / w_i \right)^{-\gamma} 
\label{cs}
\end{align}
By taking the log of both sides, we have
\begin{align*} 
\ln {a}_{i}
= \ln {\lambda}_i -\gamma \ln z + \gamma \ln \left( w_i/c \right) 
\end{align*}
As we observe two temporally distant values for cost shares ($a_i^0$ and $a_i^0$), factor prices ($w_i^0$ and $w_i^1$), and unit costs of outputs as prices ($c^0=w^0$ and $c^1=w^1$) reflecting perfect competition, we find two identities regarding the data:
\begin{align*} 
\ln {a}^0_{i}
&= \ln {\lambda}_i -\gamma \ln z^0 +\gamma \ln \left( w^0_i/w^0 \right) + \epsilon_i^0 
\\
\ln {a}^1_{i}
&= \ln {\lambda}_i -\gamma \ln z^1 +\gamma \ln \left( w^1_i/w^1 \right) + \epsilon_i^1
\end{align*}
where, we assume that $\epsilon_i^0$ and $\epsilon_i^1$ are identically and normally distributed disturbance terms.
Subtraction results in the main regression equation of the following:
\begin{align} 
\Delta \ln {a}_{i}
= -\gamma \Delta \ln z + \gamma \Delta \ln \left( w_i/w \right) +\epsilon_i
\label{main}
\end{align}
Here, the disturbance term $\epsilon_i=\epsilon_i^0-\epsilon_i^1$ is identically normally distributed, so that one can estimate $\gamma$ and $\Delta \ln z$ via a simple linear regression (\ref{main}).
That is, by regressing the growth of factor-wise cost shares i.e, $\Delta \ln {a}_{i}$ on the growth of relative prices i.e., $\Delta \ln \left( w_i/w \right)$, the slope gives the estimate of $\gamma$ while the intercept gives the estimate of $-\gamma \Delta \ln z$.
Also, note that ${\lambda}_i$ can be calibrated via (\ref{cs}) as long as we have the estimate for $\gamma$.

\subsection{The Data and Estimation}
A set of linked input--output tables includes sectoral transactions in both nominal and real terms. 
Since real value is adjusted for inflation, in order to enable comparison of quantities as if prices had not changed, and since nominal value is not adjusted, we use a price index to convert nominal into real values. 
That is, if we standardize the value of a commodity at the reference state as real, its nominal (unadjusted) value at the target state relative to the reference state equals the price index called a deflator.
Naturally, the 1995--2000--2005 linked input--output tables for both Japan \citep{miac} and Korea \citep{bok} include factor-wise deflators (395 factors for Japan and 350 factors for Korea) spanning the fiscal years recorded. 
These linked input--output tables, however, do not include deflators for primary factor (i.e., labor and capital) and therefore, we used the quality-adjusted price indexes compiled by \citet{jip} for Japan and by \citet{kip} for Korea in order to inflate the primary factor inputs observed in nominal values.

Hence, observations for both the dependent variables (cost shares as input--output coefficients $a_{ij}$) and the independent variables (price ratios $w_j/w_i$) for estimating (\ref{main}) become available with sufficient capacity, in terms of degrees of freedom, as we verify that there are $n+1$ inputs: namely, $i=0, 1, \cdots,n$; and $n$ outputs, namely $j=1,\cdots,n$, for an input--output table. 
In particular, we use the 2000 and 2005 input--output coefficient matrices out of the three-period linked input--output tables as the data for the cost share growth (i.e., $\Delta \ln a_{ij}$) and as we set the reference state at year 2000, the five-year growth of output-relative factor prices becomes simply the log difference between deflators; that is,  
\begin{align*}
\Delta \ln w_i/w_j = \ln p_i/p_j
\end{align*}
where $p_i$ denotes the deflator for commodity $i$ in year 2005 with respect to year 2000.

Figure \ref{spjpn} displays the estimated CES elasticity (i.e., $\sigma_j = 1 - \gamma_j$) with respect to the statistical significance of $\gamma_j$ i.e., the slope of the regression equation (\ref{main}) in terms of P-value, for Japan.
Figure \ref{spkor} is the version for Korea.
Note that CES elasticities were statistically significant (P-value $< 0.1$) for 176 out of 395 sectors for Japan, whereas 166 sectors were significant out of 350 sectors for Korea.
The results of estimation are summarized in the Appendix, Tables \ref{tab_JPN} and \ref{tab_KOR} for Japan and Korea, respectively.
These tables are confined to sectors whose slopes ($\gamma_j = 1-\sigma_j$) of the regression (\ref{main}) are statistically significant, and we indicate the level of significance by *** (0.01 level), **(0.05 level), and *(0.1 level), along with the estimated elasticities.
\begin{figure}[t!]
 \centering
  \includegraphics[scale = 0.63]{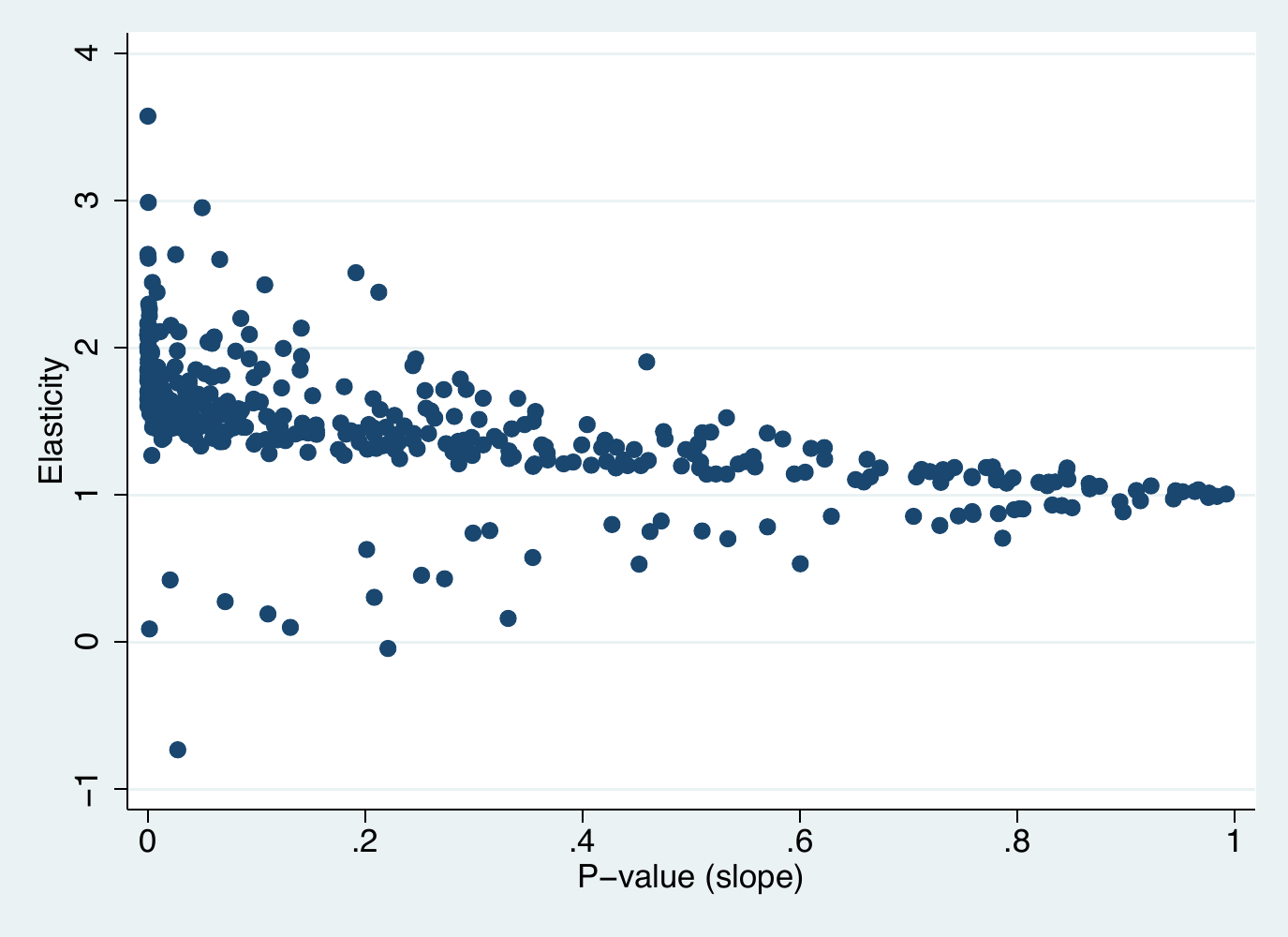}
 \caption{CES elasticity vs significance (Japan)} \label{spjpn}
\vspace{5mm}
  \includegraphics[scale = 0.63]{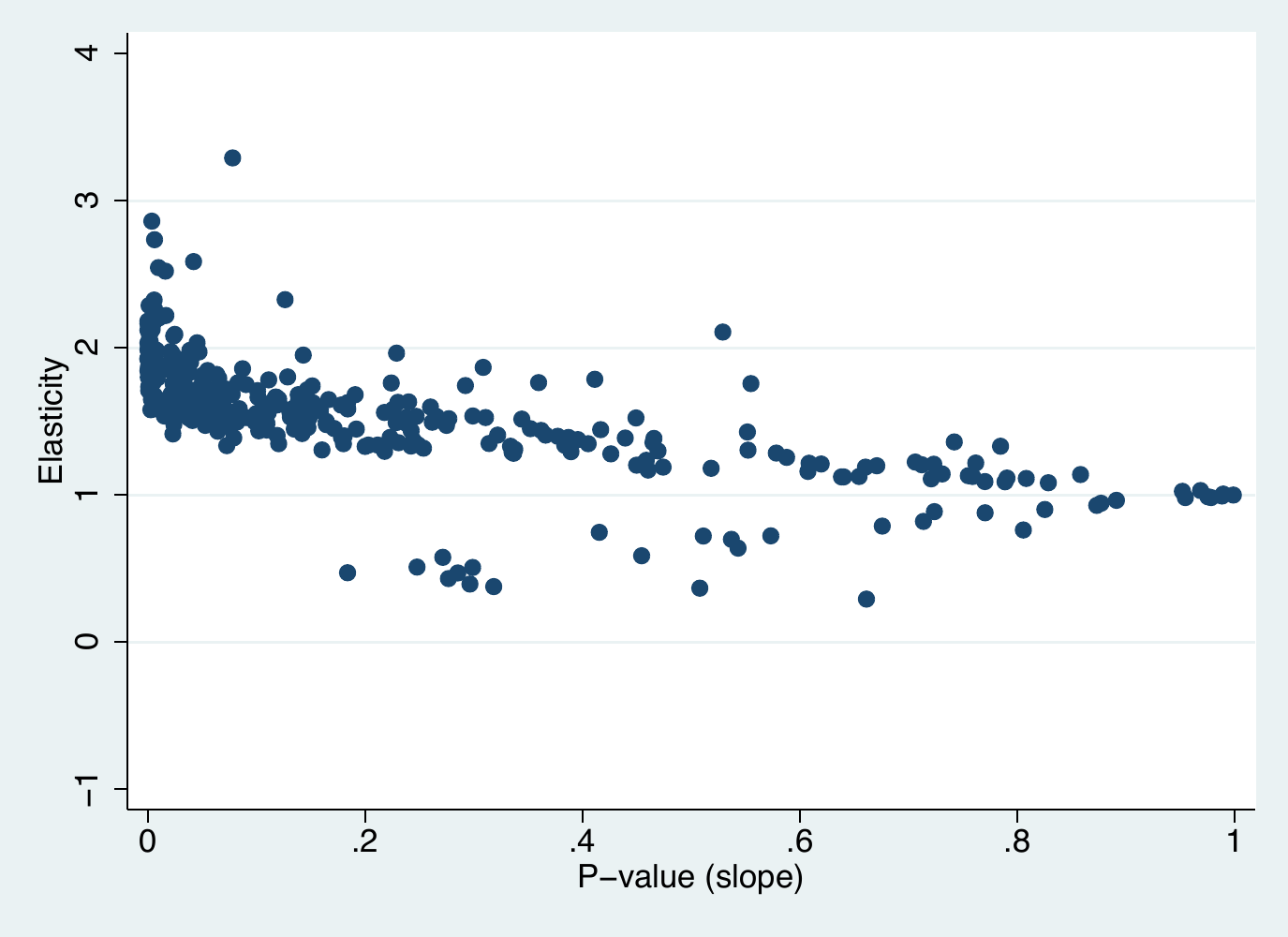}
 \caption{CES elasticity vs significance (Korea)} \label{spkor}
\vspace{-2mm}
\end{figure}
Note that we accept the null (i.e., $\gamma_j = 1-\sigma_j =0$) for sectors with statistically insignificant slope, and in that event, the average of elasticities i.e., $\sum_{j=1}^n \sigma_j/n$ was $1.32$ for Japan and $1.39$ for Korea.
Alternatively, if we accept all the estimates of elasticities, regardless of statistical significances, the average of elasticities was $1.46$ for Japan and $1.52$ for Korea.

These multifactor CES elasticities are comparable to other estimates in the literature.
The \citet{gtap} substitution elasticities for intermediate inputs which are broadly employed in CGE studies \citep[e.g., ][]{am, antimiani} range from 0.20 to 1.68, while those among internationally traded goods (i.e., Armington elasticities) are generally larger ranging from 1.15 to 34.40, depending on the industrial sector.
\citet{welsch}'s estimate for mean Armington elasticities ranges from negative 2.06 to positive 2.17, also depending on the industrial sector.
Note that these estimates are fairly comparable to the \citet{KS_2015}'s KLEM nest-wise CES elasticity estimates for 36 industrial sectors. 

In the third column of Tables \ref{tab_JPN} and \ref{tab_KOR}, we display the productivity growth $\Delta \ln z$, labeled as TFPg (Total Factor Productivity growth), which is the estimated constant of (\ref{main}) divided by the negative of the corresponding slope.  
Accordingly, the statistical significances of TFPg are evaluated by way of bootstrapping (with 400 replications) on the basis of regression (\ref{main}).   
The statistical significances of the underlying intercept are indicated with parenthesis.
Note also that these tables are sorted by the level of the estimated TFPg.
Let us now make some assessments of the estimated TFPg in regard to other possible productivity measurements.
Below is the log of T{\"o}rnqvist index 
\begin{align}
\text{TFPg (Translog)}= - \ln p + \sum_{i=0}^{n} \left(\frac{a_i^0 + a_i^1}{2}\right) \ln p_i
\label{tornq}
\end{align}
the exactness of which \citet{diewert} showed in measuring the productivity growth of Translog functions.
Thus, we know that (\ref{tornq}) is equal to the productivity growth of the underlying Translog function with or without knowing its parameters.  
Note that although it is almost impossible to estimate the parameters of a Translog function with one hundred factor inputs, its productivity growth can be measured using the same data (cost shares and price changes) as we use in estimating productivity for a multifactor CES function.
\citet{starhall} showed that the T{\"o}rnqvist index is a good approximation of TFPg measurement irrespective of the type of aggregator function and the interval of observations.
\begin{figure}[t!]
 \centering
  \includegraphics[scale = 0.63]{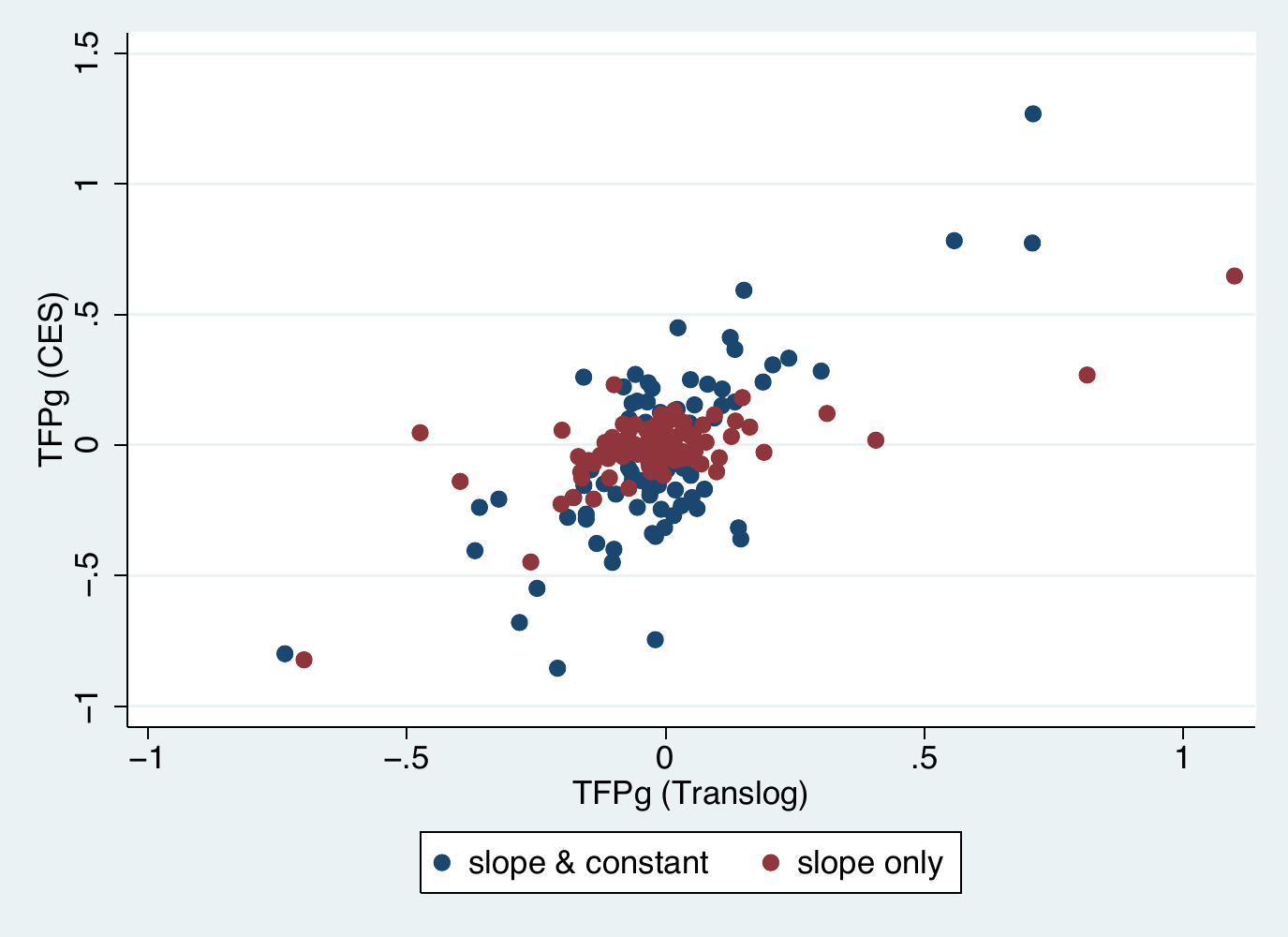}
 \caption{TFPg of different measurements. (Japan)} \label{tfp-tfp_JPN}
\vspace{5mm}
  \includegraphics[scale = 0.63]{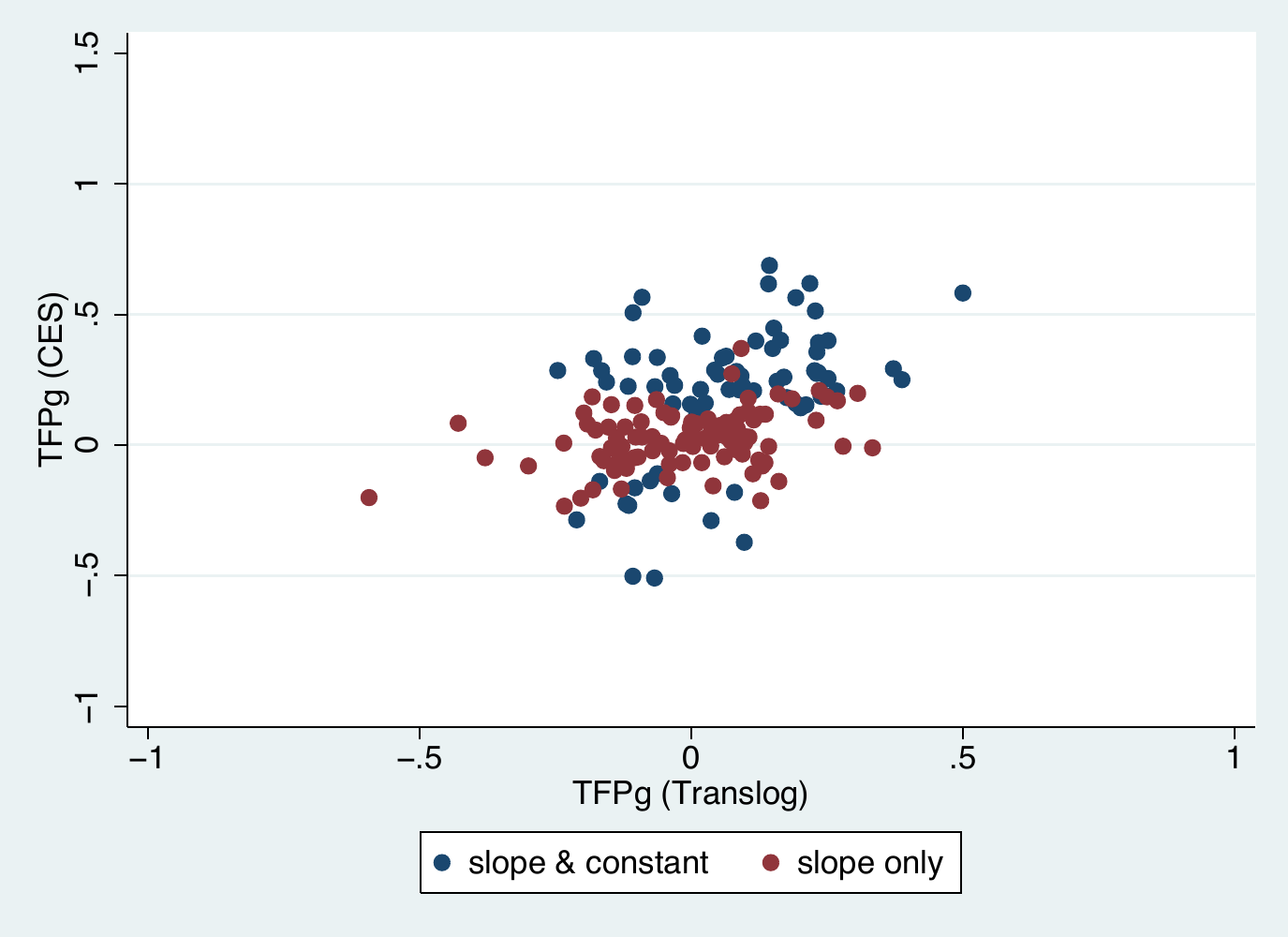}
 \caption{TFPg of different measurements. (Korea)} \label{tfp-tfp_KOR}
\vspace{-2mm}
 \end{figure}

In Figures \ref{tfp-tfp_JPN} and \ref{tfp-tfp_KOR}, we plot the estimated TFPg for a multifactor CES function, tagged as TFPg (CES) for all sectors listed in Tables \ref{tab_JPN} and \ref{tab_KOR}, against the log of the T{\"o}rnqvist indexes, tagged as TFPg (Translog).
Blue dots indicate sectors whose slope and intercept of regression (\ref{main}) were both statistically significant (P-value $<0.1$), whereas red dots indicate sectors with a slope that was significant but an intercept that was not.
In both cases, we observe agreements between the two TFPg measurements; therefore, we evaluate them objectively as summarized in Table \ref{tab_cc}.
Here, Correlation designates Pearson's correlation coefficient, whereas Concordance designates  \citet{lin}'s concordance correlation coefficient.
Note that ``Slope Only'' significant sectors are of red dots, and ``Sope and Constant'' significant sectors are of blue dots, in Figures \ref{tfp-tfp_JPN} and \ref{tfp-tfp_KOR}; and ``Slope'' indicates all slope significant sectors, thus, union of red and blue sectors.
``Bootstrapping'' indicates sectors with significant TFPg (CES) estimates via bootstrapping.
To say this in other words, by way of multifactor CES function, we obtain TFPg estimates similar to those based on Translog functions that are very general in terms of elasticities of substitution set aside their estimability, and yet, a multifactoral elasticity of substitution is estimable over very many factor inputs.
Note however that in the event that we accept the null for the insignificant slope of the regression (\ref{main}), we must assume that the function is Cobb--Douglas and that TFPg is unmeasureable.
\newcolumntype{d}{D{.}{.}{4}}
\begin{table}[htb!]
\center
\caption{Concordances and correlations between Translog and multifactor CES TFPg estimates.}
\label{tab_cc}
\begin{tabular}{lddr}
\hline
\multicolumn{1}{c}{Sectors} &	\multicolumn{1}{c}{Concor.}  	&	\multicolumn{1}{c}{Correl.}  &\multicolumn{1}{c}{Obs.} \\ \hline
Slope (JPN) & 0.645 & 0.669 & 176 \\
Slope Only (JPN) 	& 	0.673		&	0.707 & 100 \\
Slope and Constant (JPN) & 0.633 & 0.741 & 76 \\ 
Bootstrapping (JPN) & 0.794 & 0.889 & 21 \\
Slope (KOR) & 0.305 & 0.413 & 166 \\
Slope Only (KOR) 	& 	0.309		&	0.340 & 97 \\
Slope and Constant (KOR) & 0.370 & 0.413 & 69 \\
Bootstrapping (KOR) & 0.623 & 0.707 & 33 \\
\hline
\end{tabular}
\end{table}

\section{Prospective Analysis}
\subsection{Projected Prices}
In the following section, we construct a multisectoral general equilibrium model that reflects all measured elasticities and observed current cost shares; further, we exogenously impose some productivity change into the model and simulate the multisectoral propagation that can potentially take place.
For sake of simplicity, let us normalize all current prices at unity.
In that event, we know by (\ref{cs}) that: 
\begin{align*}
a_{ij} = {\lambda}_{ij}, ~~~~~~\sum_{i=0}^n{a_{ij}} =1, ~~~~j=1,2,\cdots,n
\end{align*}
Then, the system of CES unit cost functions in equilibrium, under some exogenously given productivity change i.e., $\mathbf{z}=\left( z_1, z_2, \cdots, z_n \right) \neq \boldsymbol{1}$, must be in the following state:
\begin{align}
\begin{split}
\pi_1 &= z_1^{-1}\left( a_{01} \pi_0^{{\gamma}_1} + a_{11} \pi_1^{{\gamma}_1} + \cdots a_{n1} \pi_n^{{\gamma}_1} \right)^{\frac{1}{{\gamma}_1}} \\
\pi_2 &= z_2^{-1}\left( a_{02} \pi_0^{{\gamma}_2} + a_{12} \pi_1^{{\gamma}_2} + \cdots a_{n2} \pi_n^{{\gamma}_2} \right)^{\frac{1}{{\gamma}_2}} \\
&\vdots \\
\pi_n &= z_n^{-1}\left( a_{0n} \pi_0^{{\gamma}_n} + a_{1n} \pi_1^{{\gamma}_n} + \cdots a_{nn} \pi_n^{{\gamma}_n} \right)^{\frac{1}{{\gamma}_n}} 
\end{split} \label{ge}
\end{align}
where the projected (ex post) general equilibrium price for factor $i$ is denoted by $\pi_i$. 
Note that the current state i.e, $\mathbf{z}=\boldsymbol{1}$ can be reproduced by setting all prices at the current state i.e., $\boldsymbol{\pi} =\boldsymbol{1}$ and vice versa.\footnote{This may not be so obvious when $\boldsymbol{\gamma} = \mathbf{0}$, until we see (\ref{lnpi}).} 

The projected price, ex post the exogenous productivity change, can be obtained by solving (\ref{ge}) for $\boldsymbol{\pi}$. 
By rearranging, we have:
\begin{align*}
z_1^{{\gamma}_1} \pi_1^{{\gamma}_1} &=  a_{01} \pi_0^{{\gamma}_1} + a_{11} \pi_1^{{\gamma}_1} + \cdots a_{n1} \pi_n^{{\gamma}_1} \\
z_2^{{\gamma}_2}\pi_2^{{\gamma}_2} &= a_{02} \pi_0^{{\gamma}_2} + a_{12} \pi_1^{{\gamma}_2} + \cdots a_{n2} \pi_n^{{\gamma}_2} \\
&\vdots \\
z_n^{{\gamma}_n}\pi_n^{{\gamma}_n} &=  a_{0n} \pi_0^{{\gamma}_n} + a_{1n} \pi_1^{{\gamma}_n} + \cdots a_{nn} \pi_n^{{\gamma}_n}  
\end{align*}
or by way of row vectors and matrices:
\begin{align*}
\boldsymbol{\pi}^{\boldsymbol{\gamma}} \left< \mathbf{z}^{\boldsymbol{\gamma}} \right>
=\mathbf{a}_{0} + \boldsymbol{\pi}^{\boldsymbol{\gamma}} \mathbf{A}
\end{align*}
where $\boldsymbol{\pi}^{\boldsymbol{\gamma}}=\left( \pi_1^{\gamma_1}, \cdots, \pi_n^{\gamma_n} \right)$ and $\mathbf{z}^{\boldsymbol{\gamma}}=\left( z_1^{\gamma_1}, \cdots, z_n^{\gamma_n} \right)$, while we set the price of a primary input as a num\'{e}raire i.e., $\pi_0=1$.
Angle brackets indicate diagonalization.
Note that $\mathbf{A}$ and $\mathbf{a}_0$ are the current input--output coefficients matrix and value added coefficients vector, respectively.
Now, the projected equilibrium price $\boldsymbol{\pi}$ can be obtained in terms of $\mathbf{z}$:
\begin{align}
\boldsymbol{\pi} = \left( \mathbf{a}_0 \left[ \left< \mathbf{z}^{\boldsymbol{\gamma}} \right> - \mathbf{A}  \right]^{-1} \right)^{\frac{1}{\boldsymbol{\gamma}}}
\label{pi_ces}
\end{align}

Besides CES, we may use (\ref{pi_ces}) to obtain the projected price for the cases of Leontief ($\boldsymbol{\gamma} = \mathbf{1}$) and Cobb--Douglas ($\boldsymbol{\gamma} = \mathbf{0}$).
The Leontief case is straightforward:
\begin{align}
\boldsymbol{\pi} = \mathbf{a}_0 \left[ \left< \mathbf{z} \right> - \mathbf{A}  \right]^{-1} 
\label{pi_leontief}
\end{align}
For the Cobb--Douglas case, we first take the log of (\ref{ge}) and then let $\boldsymbol{\gamma}\to \mathbf{0}$.
Below, we work on the unit cost function of any industrial sector $j$ while omitting the subscript:
\begin{align*}
\ln \pi + \ln z 
=\frac{\ln \left( a_0 + \sum_{i=1}^n a_{i} \pi_i^{\gamma} \right)}{\gamma} 
\to \sum_{i=1}^n a_i \ln \pi_i
\end{align*}
Here, we applied l'Hospital's rule when we let $\gamma \to 0$, since in that event the nominator and the denominator both approach zero.
By way of row vectors and matrices, this can be written concisely:
\begin{align}
\ln \boldsymbol{\pi} = -\ln \mathbf{z} + \left(\ln \boldsymbol{\pi} \right) \mathbf{A}
\label{lnpi}
\end{align}
where the log operators are applied element-wise.
The Cobb--Douglas version of the projected price will thus be:
\begin{align}
\boldsymbol{\pi} &= \exp \left( - \left(\ln \mathbf{z} \right) \left[ \mathbf{I} - \mathbf{A} \right]^{-1} \right) 
\label{pi_cd}
\\
&= \left( \frac{1}{\prod_{i=1}^{n} z_i^{\ell_{i1}}}, \frac{1}{\prod_{i=1}^{n} z_i^{\ell_{i2}}}, \cdots, \frac{1}{\prod_{i=1}^{n} z_i^{\ell_{in}} }\right) \notag
\end{align}
where, $\ell_{ij}$ is an element of the Leontief inverse matrix $\left[ \mathbf{I} - \mathbf{A} \right]^{-1}$.

\subsection{Projected Structures}
Since we set the current price to unity, the final demand in monetary terms will be the same as the physical quantity demanded.
Let the current (nominal) final demand be denoted by a column vector $\mathbf{d} = \left( d_1, \cdots, d_n \right)^\intercal \geq \mathbf{0}$.
Note that the sum of product-wise final demand and that of sector-wise value added (social cost) equals the GDP.
If we have the projected price attributable to some exogenous productivity change, we can evaluate the corresponding welfare change in terms of social cost saved (SCS, hereafter); that is,
\begin{align}
\text{SCS}
= \left( \mathbf{1} - \boldsymbol{\pi} \right) \mathbf{d}
=\sum_{j=1}^n v_j - v^{\prime}_j 
\label{scs}
\end{align}
Note that $v_j$ and $v_j^{\prime}$ denote current and projected value added for sector $j$.
The sector-wise distribution of SCS, however, requires more examination of the projected structure of the economy.

According to (\ref{cs}), the projected cost shares ex post the exogenous productivity change $\mathbf{z}$, which we denote by $b_{ij}$, can be evaluated by the following identity:
\begin{align}
b_{ij} = a_{ij} \left( z_j\pi_j/\pi_{i}\right)^{-\gamma_j}
~~~~~~~i=0,1,\cdots,n
\label{bij}
\end{align}
Hence, under CES, the projected primary factor input (or value added) distribution $\mathbf{v}^{\prime}=\left(v^{\prime}_1, \cdots,v^{\prime}_n\right)$ spanning over the sectors for a given fixed final demand $\mathbf{d}$ (in physical quantity) can be evaluated as follows:
\begin{align}
\mathbf{v}^{\prime} 
&= \mathbf{b}_{0} \left[ \mathbf{I} - \mathbf{B} \right]^{-1} \left< \boldsymbol{\pi} \right> \left< \mathbf{d} \right>
\label{vprime_ces}
\end{align}
where the entries for $\mathbf{b}_0$ and $\mathbf{B}$ are specified by (\ref{bij}).
Conversely, the current distribution of primary factor inputs (or value added) $\mathbf{v}=\left(v_1, \cdots,v_n\right)$ is specified by the current observed cost shares as follows:
\begin{align}
\mathbf{v} = \mathbf{a}_{0} \left[ \mathbf{I} - \mathbf{A} \right]^{-1} \left< \mathbf{d} \right>
\label{v_ces}
\end{align}
Since (\ref{vprime_ces}) and (\ref{v_ces}) are row vectors, one can evaluate SCS in terms of sector-wise distribution.

\subsection{Uniform CES Elasticity}
Here, we examine how SCS will be distributed among sectors depending on the projected structures pertaining to uniform substitution elasticities i.e., $\gamma_1=\gamma_2=\cdots=\gamma_n=\gamma$.
First, by plugging (\ref{bij}) into (\ref{vprime_ces}) under some uniform elasticity $\sigma=1-\gamma$,
we have the following exposition for the projected value added distribution:
\begin{align}
\mathbf{v}^{\prime} 
&= \mathbf{a}_{0} \left< \boldsymbol{\pi}^{-\gamma} \right> \left< \mathbf{z}^{-\gamma} \right>  \left[ \mathbf{I} - \left< \boldsymbol{\pi}^{\gamma} \right> \mathbf{A}\left< \boldsymbol{\pi}^{-\gamma} \right>\left< \mathbf{z}^{-\gamma} \right> \right]^{-1}  \left< \boldsymbol{\pi} \right> \left<\mathbf{d} \right> \notag \\
&= \mathbf{a}_{0}   \left[ \left< \mathbf{z}^{\gamma} \right> - \mathbf{A} \right]^{-1}  \left< \boldsymbol{\pi}^{1-\gamma} \right>
\left<\mathbf{d} \right>
\label{vprime_ces2}
\end{align}
Hence, we know that for Cobb--Douglas and Leontief cases the projected value added distribution will be:
\begin{align}
\mathbf{v}^{\prime}({\text{Cobb--Douglas}})
&= \mathbf{a}_{0}   \left[ \mathbf{I} - \mathbf{A} \right]^{-1}  \left< \boldsymbol{\pi} \right>\left<\mathbf{d} \right> \label{vcd}
\\
\mathbf{v}^{\prime}({\text{Leontief}})
&= \mathbf{a}_{0}   \left[ \left< \mathbf{z} \right> - \mathbf{A} \right]^{-1}  \left<\mathbf{d} \right>
\label{vl}
\end{align}
Note that projected equilibrium price (\ref{pi_cd}) must be applied to (\ref{vcd}) for the Cobb--Douglas case.

Further, let us show below that, under uniform substitution elasticity less than unity, the SCS distribution will always be positive (in all sectors) against any exogenous productivity increase, and vice versa.
Specifically, we show that
\begin{propo*}\normalfont
Under $0\leq \gamma \leq 1$, SCS is positive in all sectors such that $\mathbf{v}-\mathbf{v}' \geq \mathbf{0}$, if the exogenous productivity is increasing i.e., $\mathbf{z}\geq\mathbf{1}$, and SCS is negative in all sectors such that $\mathbf{v}-\mathbf{v}' \leq \mathbf{0}$, if the exogenous productivity is decreasing i.e., $\mathbf{z} \leq\mathbf{1}$.
\end{propo*}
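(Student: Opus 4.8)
The plan is to reduce the sectoral claim to a statement about the monotone response of the equilibrium prices $\boldsymbol{\pi}$ to the productivity vector $\mathbf{z}$, and then to establish that monotonicity directly from the equilibrium system (\ref{ge}).

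First I would simplify the two value-added rows. Since the augmented cost shares sum to one in every column, $a_{0j}+\sum_{i=1}^n a_{ij}=1$, the value-added row obeys $\mathbf{a}_0=\mathbf{1}\left(\mathbf{I}-\mathbf{A}\right)$, so that $\mathbf{a}_0\left[\mathbf{I}-\mathbf{A}\right]^{-1}=\mathbf{1}$ and (\ref{v_ces}) collapses to $v_j=d_j$. For the projected row I would combine (\ref{vprime_ces2}) with (\ref{pi_ces}): under uniform $\gamma$ one has $\boldsymbol{\pi}^{\gamma}=\mathbf{a}_0\left[\left<\mathbf{z}^{\gamma}\right>-\mathbf{A}\right]^{-1}$, whence the product $\boldsymbol{\pi}^{\gamma}\left<\boldsymbol{\pi}^{1-\gamma}\right>$ equals the price row $\boldsymbol{\pi}$ entrywise and $\mathbf{v}'=\boldsymbol{\pi}\left<\mathbf{d}\right>$, i.e.\ $v'_j=\pi_j d_j$. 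Thus SCS factorizes sector by sector as $v_j-v'_j=d_j\left(1-\pi_j\right)$, and since $d_j\geq 0$ the sign of each component is governed entirely by the sign of $1-\pi_j$.

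The crux is then to show $\mathbf{z}\geq\mathbf{1}\Rightarrow\boldsymbol{\pi}\leq\mathbf{1}$ and the symmetric reverse. For $0<\gamma\leq 1$ I would set $u_i=\pi_i^{\gamma}>0$ and rewrite (\ref{ge}) as $z_j^{\gamma}u_j=a_{0j}+\sum_{i=1}^n a_{ij}u_i$ with $u_0=\pi_0^{\gamma}=1$. Assuming $\mathbf{z}\geq\mathbf{1}$, let $U=\max_j u_j$ be attained at $j^\star$; using $\sum_{i=1}^n a_{ij^\star}=1-a_{0j^\star}$ gives $z_{j^\star}^{\gamma}U\leq a_{0j^\star}+\left(1-a_{0j^\star}\right)U=U-a_{0j^\star}(U-1)$, so $U>1$ would force $z_{j^\star}^{\gamma}<1$, contradicting $\mathbf{z}\geq\mathbf{1}$; hence $U\leq 1$ and $\boldsymbol{\pi}\leq\mathbf{1}$. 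The case $\mathbf{z}\leq\mathbf{1}$ follows by the same argument applied to $\min_j u_j$. The Cobb--Douglas endpoint $\gamma=0$ is handled at once from (\ref{pi_cd}): nonnegativity of $\left[\mathbf{I}-\mathbf{A}\right]^{-1}$ turns $\ln\mathbf{z}\geq\mathbf{0}$ into $\left(\ln\mathbf{z}\right)\left[\mathbf{I}-\mathbf{A}\right]^{-1}\geq\mathbf{0}$ and hence $\boldsymbol{\pi}\leq\mathbf{1}$.

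I expect the monotone price step to be the main obstacle, its delicate point being the strictness needed at the extremal sector. The maximum-principle contradiction above is clean whenever every sector has positive direct value added, $a_{0j}>0$; without that hypothesis I would instead invoke the standard monotonicity of M-matrix inverses, observing that $\left<\mathbf{z}^{\gamma}\right>-\mathbf{A}$ and $\mathbf{I}-\mathbf{A}$ are Z-matrices with $\left<\mathbf{z}^{\gamma}\right>-\mathbf{A}\geq\mathbf{I}-\mathbf{A}$ when $\mathbf{z}\geq\mathbf{1}$, so both possess nonnegative inverses with $\left[\left<\mathbf{z}^{\gamma}\right>-\mathbf{A}\right]^{-1}\leq\left[\mathbf{I}-\mathbf{A}\right]^{-1}$; left-multiplying by $\mathbf{a}_0\geq\mathbf{0}$ yields $\boldsymbol{\pi}^{\gamma}\leq\mathbf{1}$ immediately. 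Either route, combined with the factorization $v_j-v'_j=d_j\left(1-\pi_j\right)$, closes the argument.
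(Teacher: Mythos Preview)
Your argument is correct, and it is genuinely different from---and tighter than---the paper's. The paper keeps the two value-added rows in their raw matrix form and compares them termwise: it argues from a Neumann-type expansion that $\left[\left<\mathbf{z}^{\gamma}\right>-\mathbf{A}\right]^{-1}\leq\left[\mathbf{I}-\mathbf{A}\right]^{-1}$ when $\mathbf{z}\geq\mathbf{1}$, separately asserts (without proof, by appeal to monotonicity of the cost map) that $\boldsymbol{\pi}^{1-\gamma}\leq\mathbf{1}$, and then multiplies these two inequalities to bound the matrix $\left[\left<\mathbf{z}^{\gamma}\right>-\mathbf{A}\right]^{-1}\left<\boldsymbol{\pi}^{1-\gamma}\right>$ against $\left[\mathbf{I}-\mathbf{A}\right]^{-1}$, finishing by sandwiching with $\mathbf{a}_{0}$ and $\left<\mathbf{d}\right>$. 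You instead exploit the adding-up identity $\mathbf{a}_{0}=\mathbf{1}\!\left(\mathbf{I}-\mathbf{A}\right)$ and the price equation $\boldsymbol{\pi}^{\gamma}=\mathbf{a}_{0}\left[\left<\mathbf{z}^{\gamma}\right>-\mathbf{A}\right]^{-1}$ to collapse both rows outright, obtaining the sectorwise factorization $v_{j}-v'_{j}=d_{j}\left(1-\pi_{j}\right)$; this reduces the whole proposition to the single scalar claim $\boldsymbol{\pi}\leq\mathbf{1}$ (respectively $\geq\mathbf{1}$), which you then \emph{prove} by a maximum-principle argument (or the M-matrix comparison, which is effectively the paper's first step left-multiplied by $\mathbf{a}_{0}$). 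Your route is shorter, makes the sectorwise identity (already implicit in the paper's aggregate formula $\text{SCS}=(\mathbf{1}-\boldsymbol{\pi})\mathbf{d}$) explicit, and does not leave the price monotonicity as an unproved assumption; the paper's route, by contrast, keeps the full entrywise matrix inequality visible, which is informative if one cares about the intermediate structural comparison and not only the value-added row.
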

\begin{proof}
Because the input--output coefficient as well as the productivity is nonnegative i.e., $\mathbf{A}\geq\mathbf{0}$ and $\mathbf{z}\geq\mathbf{0}$, we have the following exposition:
\begin{align}
\begin{split}
\left[ \left< \mathbf{z}^\gamma \right> - \mathbf{A}  \right]^{-1}
&= \left<\mathbf{z}^{-\gamma}\right> + \mathbf{A}\left<\mathbf{z}^{-2\gamma}\right> + \mathbf{A}^2\left<\mathbf{z}^{-3\gamma}\right> +\cdots 
\\ 
\left[ \mathbf{I} - \mathbf{A}  \right]^{-1}
&= \mathbf{I} + \mathbf{A} + \mathbf{A}^2 +\cdots 
\end{split} \label{spread}
\end{align}
Thus, by taking $0\leq \gamma \leq 1$ into account, we know that
\begin{align}
\begin{split}
\left[ \left< \mathbf{z}^\gamma \right> - \mathbf{A}  \right]^{-1}
&\leq \left[ \mathbf{I} - \mathbf{A}  \right]^{-1} ~~~~~\text{ if }~\mathbf{z} \geq \mathbf{1} \\
\left[ \left< \mathbf{z}^\gamma \right> - \mathbf{A}  \right]^{-1}
&\geq \left[ \mathbf{I} - \mathbf{A}  \right]^{-1} ~~~~~\text{ if }~\mathbf{z}\leq\mathbf{1}
\end{split}
\label{dst}
\end{align}
Moreover, as we take for granted that the unit cost mapping (\ref{ge}) is monotone increasing in price, the projected equilibrium price $\boldsymbol{\pi}$ must be smaller (larger) than unity when the exogenous productivity $\mathbf{z}$ is increasing (decreasing).
Thus, by taking $0\leq \gamma \leq 1$ into account we know that
\begin{align}
\begin{split}
& \boldsymbol{\pi} \leq \boldsymbol{\pi}^{1-\gamma} \leq \mathbf{1}  ~~~~~\text{ if }~\mathbf{z} \geq \mathbf{1} \\
& \boldsymbol{\pi} \geq \boldsymbol{\pi}^{1-\gamma} \geq \mathbf{1}  ~~~~~\text{ if }~\mathbf{z} \leq \mathbf{1}
\end{split}\label{dpr}
\end{align}
Hence, the structural differences between the reference and the projected states can be assessed as follows:
\begin{align}
\begin{split}
&\left[ \mathbf{I} - \mathbf{A} \right]^{-1} 
\geq
\left[ \left< \mathbf{z}^{\gamma} \right> - \mathbf{A} \right]^{-1}  \left< \boldsymbol{\pi}^{1-\gamma} \right> 
 ~~~~~\text{ if }~\mathbf{z} \geq \mathbf{1} 
 \\
&\left[ \mathbf{I} - \mathbf{A} \right]^{-1} 
\leq
\left[ \left< \mathbf{z}^{\gamma} \right> 
-
\mathbf{A} \right]^{-1}  \left< \boldsymbol{\pi}^{1-\gamma} \right> 
 ~~~~~\text{ if }~\mathbf{z} \leq \mathbf{1} 
\end{split} \label{dmain}
\end{align}
Since the SCS distribution $\mathbf{v}-\mathbf{v}^\prime$ is the difference between (\ref{v_ces}) and (\ref{vprime_ces2}), the above (\ref{dmain}) suffices for the proposition.
\end{proof}
\begin{rem*}\normalfont
This proposition is inclusive of Cobb--Douglas ($\gamma=0$) and Leontief ($\gamma=1$) systems.
Uniformity of substitution elasticity $\gamma$ is required for obtaining (\ref{vprime_ces2}).
For substitution elasticity larger than unity i.e., $\gamma = 1-\sigma < 0$, the inequalities for (\ref{dst}) will be reversed whereas those for (\ref{dpr}) remain stable, so that (\ref{dmain}) may not hold necessarily.
\end{rem*}

\subsection{Simulation}
Let us now apply the framework specified in the previous sections.
First, we calibrate the multisectoral models with different elasticities, namely Leontief, Cobb--Douglas, and multifactor CES, as of year 2005.
Thus, the cost shares of the current state i.e. $\mathbf{a}_0$ and $\mathbf{A}$ are as of year 2005. 
For the multifactor CES system, we make use of the elasticities that were statistically significant i.e., the sectors displayed in Tables \ref{tab_JPN} and \ref{tab_KOR}, while we undertake unit elasticity (or the null hypothesis) for the rest of the sectors.\footnote{For sake of reference, we may also use the estimated elasticities for all sectors, regardless of statistical significances.
Such case will be indicated as CES (all estimates), henceforth.}

As for the exogenous productivity change $\mathbf{z}$, we examine the ``productivity doubling'' of the ``Ready mixed concrete'' (RMC, hereafter) sector which is 150th sector for Japan, and the 159th for Korea.
That is, 
\begin{align}
\begin{split}
\text{Japan:}~~~&z_{j=150} = 2,~~z_{j\neq150}=1~~~~(n=395) \\
\text{Korea:}~~~&z_{j=159} = 2,~~z_{j\neq159}=1~~~~(n=350)
\end{split} \label{rmc}
\end{align}
There are couple reasons for choosing this sector.
For one thing, this stimuli is better influential than not throughout the economy.
In other words, upstream industrial sectors are preferable, for they may be influential to all downstream sectors, whereas downstream sectors do not have much influence on upstream sectors. 
We performed triangulation,\footnote{Stages of production leading to final goods are investigated through permutation of sectors. See, \citet{kondo} for recent developments.} 
in regard to the work of \citet{cw}, upon the 2005 input--output coefficient matrices for both Japan and Korea, and we found that the RMC sector was placed at the upper stream (137th out of 395 for Japan, and 65th out of 350 for Korea) of the supply chain in both economies.
Another criterion is whether the output of the sector is completely domestic (non imported) as the current study precludes international trade.
And most importantly, the equivalence of the sector to be examined for the two countries is required.
The RMC sector meets all of these criteria.

In Table \ref{tab_scs} we summarize the results of calculating SCS via (\ref{scs}) for the four systems: namely Leontief, Cobb--Douglas, CES, and CES (all estimates); in two countries: namely Japan and Korea. 
\begin{table}[tb!]
\center
\caption{SCS (social cost saved) by productivity doubling of RMC (ready mixed concrete) sector.
BJPY stands for Billion Japanese Yens.
BKRW stands for Billion Korean Republic Wons.
Values in parentheses are the kurtosis of the corresponding SCS distribution.}
\label{tab_scs}
\begin{tabular}{lrcrc}
\hline
\multicolumn{1}{c}{} 
&	\multicolumn{2}{c}{Japan [BJPY]} 
&\multicolumn{2}{c}{Korea [BKRW]} 
\\ \hline
Output	&	1,347	&		&	6,398	&		\\
SCS Leontief	&	674	&	(315)	&	3,203	&	(162)	\\
SCS Cobb--Douglas	&	926	&	(52)	&	4,349	&	(84)	\\
SCS CES	&	944	&	(45)	&	4,550	&	(102)	\\
SCS CES (all estimates)   &   976  & (39) & 4,643 & (75) \\
\hline
\end{tabular}
\end{table}
The projected equilibrium price $\boldsymbol{\pi}$ for given $\mathbf{z}$ as in (\ref{rmc}) is calculated using (\ref{pi_leontief}) for the Leontief, (\ref{pi_cd}) for the Cobb--Douglas, and (\ref{pi_ces}) for the CES systems.
Along with the SCS, we display the output of the RMC sector of the 2005 input--output table. 
Notably, the SCS of the Leontief system is very slightly larger than one half the output of the RMC sector, reflecting the productivity doubling of the RMC sector.
This is legitimate, in regard to (\ref{spread}), as we consider:
\begin{align*}
\left[ \mathbf{I} - \mathbf{A} \right]^{-1} 
-\left[ \left< \mathbf{z} \right> - \mathbf{A} \right]^{-1}
\approx \mathbf{I} - \left< \mathbf{z} \right>^{-1} = 1/2
\end{align*}
Conversely, the SCS of the Cobb--Douglas and CES systems is larger than that of the Leontief system, reflecting further propagation across sectors that have larger elasticity.

Let us now look into the sectoral distribution of the SCS.
Figures \ref{scsdLJ}, \ref{scsdCDJ}, \ref{scsdCESJ}, and \ref{scsdCESrJ} show the projected sector-wise SCS from productivity doubling in the RMC sector under the Leontief, Cobb--Douglas, CES, and CES (all estimates) systems, respectively, for Japan. 
Corresponding figures for Korea are Figures \ref{scsdLK}, \ref{scsdCDK}, \ref{scsdCESK}, and \ref{scsdCESrK}.
As we have anticipated in regard to the previous Proposition, SCS for the Leontief and Cobb--Douglas systems is distributed on the positive side overall.\footnote{However, due to the negative entries for $\mathbf{d}$, slightly negative values are observed.}
At base, when there is productivity doubling in one sector, its price will be cut in half.
The inter-sectoral propagation of that price change will nevertheless be different, depending on the elasticity of factor substitution among the interacting sectors.
As for the Leontief system, because factor substitution will not exist in any other sector, the price change of RMC to half its former level will have no effect upon its intermediate demand.
Thus, in that event, all the factor inputs (including the primary factor) for the RMC sector will be reduced by half.
This is the main reason why the primary factor for the RMC sector is reduced (as SCS) rather prominently for the Leontief system.
Consequently, the intermediate demand of the factors (including the primary factor) will be reduced respectively by as much as half the amount that used to go into the RMC sector.
Such reduction of intermediate demand and thus of supply will be accumulated in convergence.
In other words, at least half of the primary factor put into the RMC sector will be directly reduced, and beyond that, the primary factor in any other sector will be reduced indirectly.
Figures \ref{scsdLJ} and \ref{scsdLK} reflect such propagation of productivity doubling in the RMC sector upon primary factor demand under a system of zero elasticity of substitution.

In contrast, as for the Cobb--Douglas system, the intermediate demand for RMC, when its price is reduced to half, must be doubled; that is the very definition of unit elasticity of substitution.
Thus, in that event, the monetary output and the factor inputs (including the primary factor) of the RMC sector will not change. 
As for an elastic CES system with elasticity of substitution larger than unity, the factor demand for RMC becomes larger than two fold, when the price of RMC is reduced by half.
And in that event, the factor inputs of the RMC sector can be increased.\footnote{This is the main reason why we observe, in Figures \ref{scsdCESJ} and \ref{scsdCESK}, negative SCS (increased primary factor input) in the RMC sector.}
In either system, since the system of unit cost functions is strictly concave, the price of all factors except that of the primary factor that will stay constant, will converge in a strictly descending manner.
Hence, in equilibrium, the primary factor will be mitigated for the sectors where the primary factor becomes relatively expensive compared with other factor inputs.
Notably, Figures \ref{scsdCDJ} and \ref{scsdCESJ} indicate that primary factor is reduced (as SCS) rather prominently at sectors, namely, ``Public construction of roads'' (279th), ``Public construction of rivers, drainages and others'' (280th), and ``Residential construction (non-wooden)'' (275th), for Japan.
Figures \ref{scsdCDK} and \ref{scsdCESK} indicate that ``Residential building construction'' (289th), ``Road construction'' (272nd), and ``Non-residential building construction'' (270th) are prominent for Korea.
These sectors are obviously the ones that utilize RMC extensively for production.
In other words, the primary factor in these sectors will be substituted by RMC with reduced price.

\begin{figure}[t!]
 \centering
  \includegraphics[scale = 0.63]{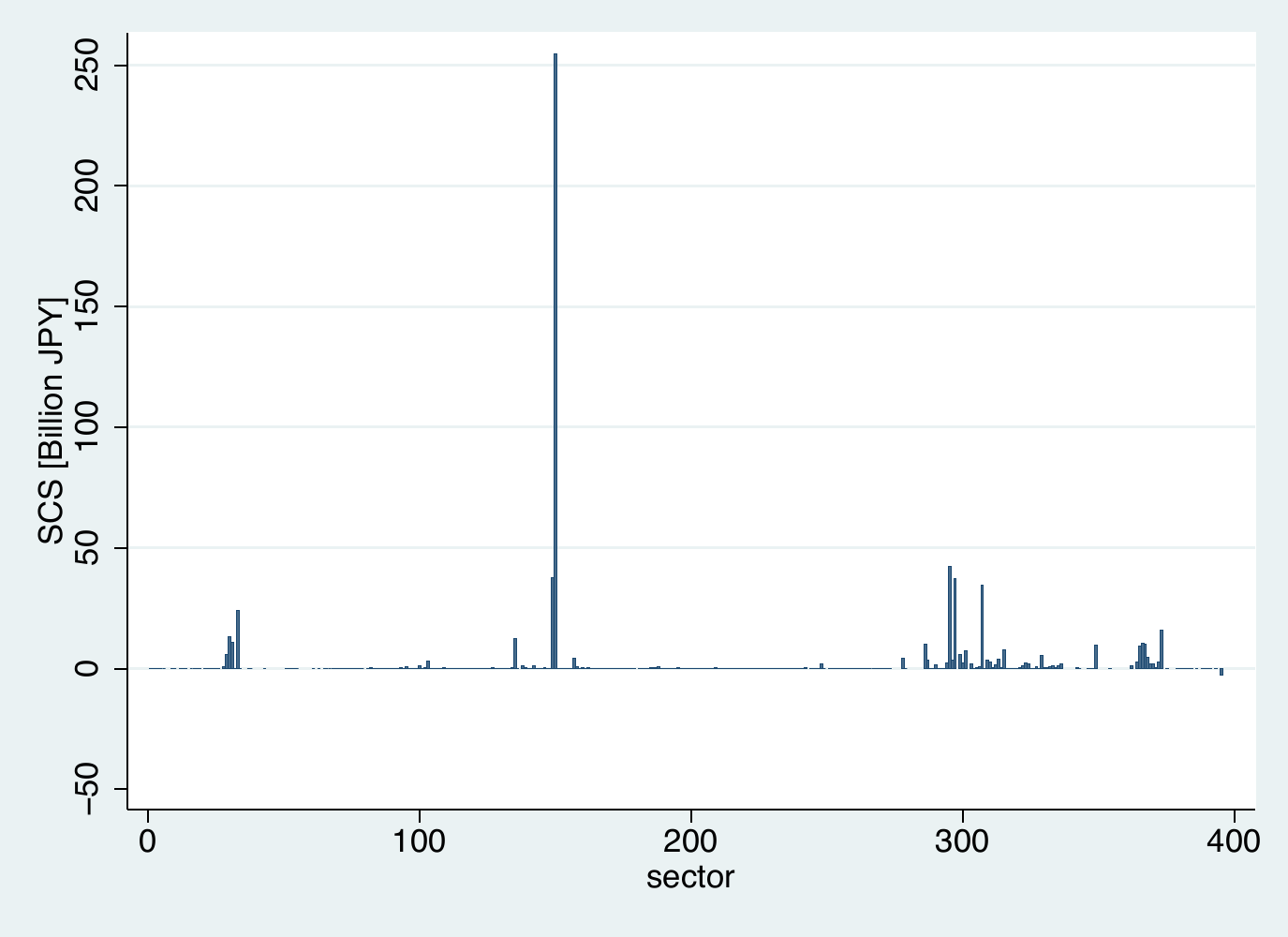}
 \caption{Sectoral distribution of SCS for productivity doubling of RMC sector (150th) for Leontief system. (Japan)} \label{scsdLJ}
\end{figure}

\begin{figure}[t!]
\centering
  \includegraphics[scale = 0.63]{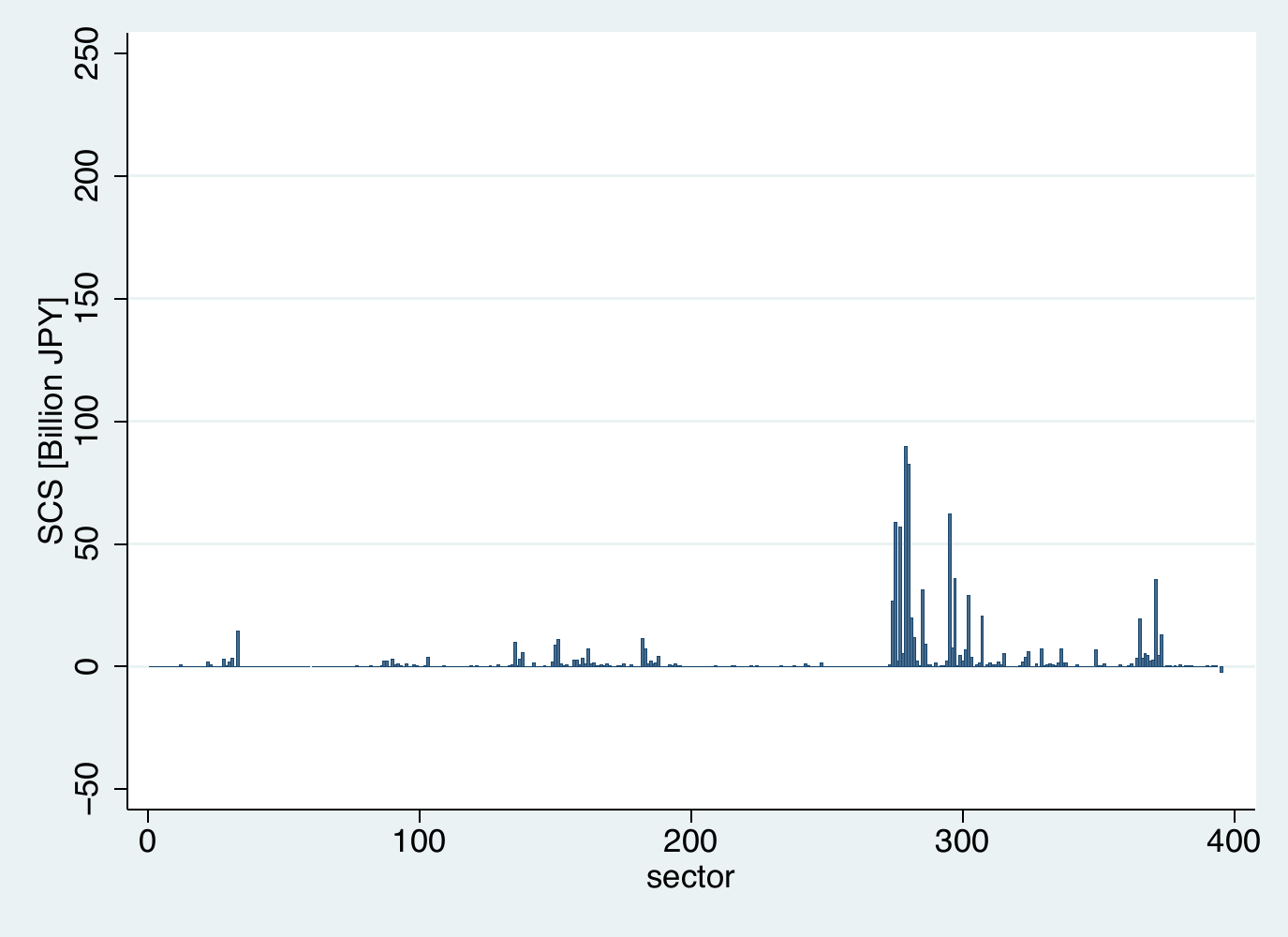}
 \caption{Sectoral distribution of SCS for productivity doubling of RMC sector (150th) for Cobb--Douglas system. (Japan)} \label{scsdCDJ}
\end{figure}

\begin{figure}[h!]
\centering
  \includegraphics[scale = 0.63]{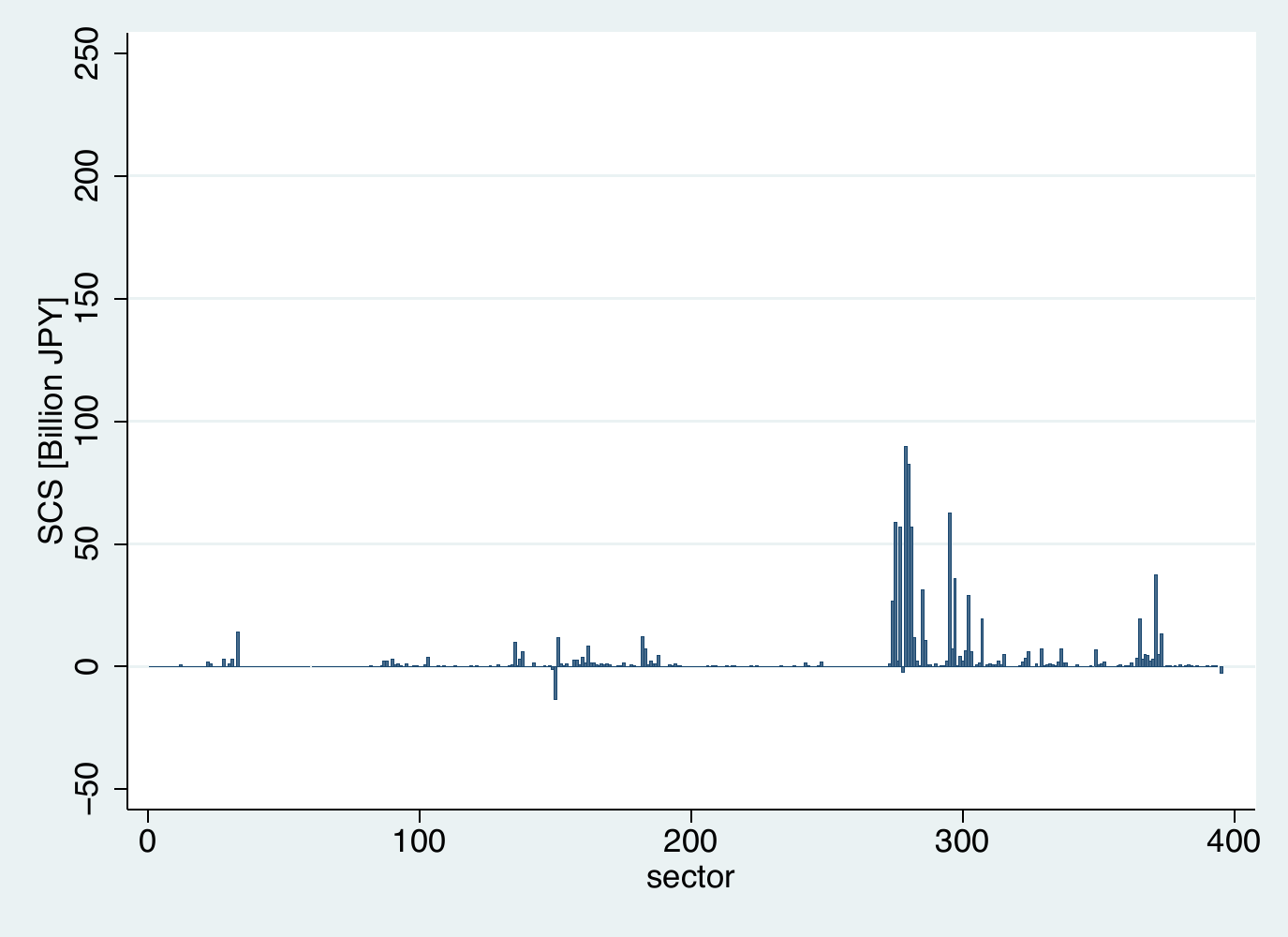}
 \caption{Sectoral distribution of SCS for productivity doubling of RMC sector (150th) for multifactor CES system. (Japan)} \label{scsdCESJ}
 \end{figure}

\begin{figure}[t!]
\centering
  \includegraphics[scale = 0.63]{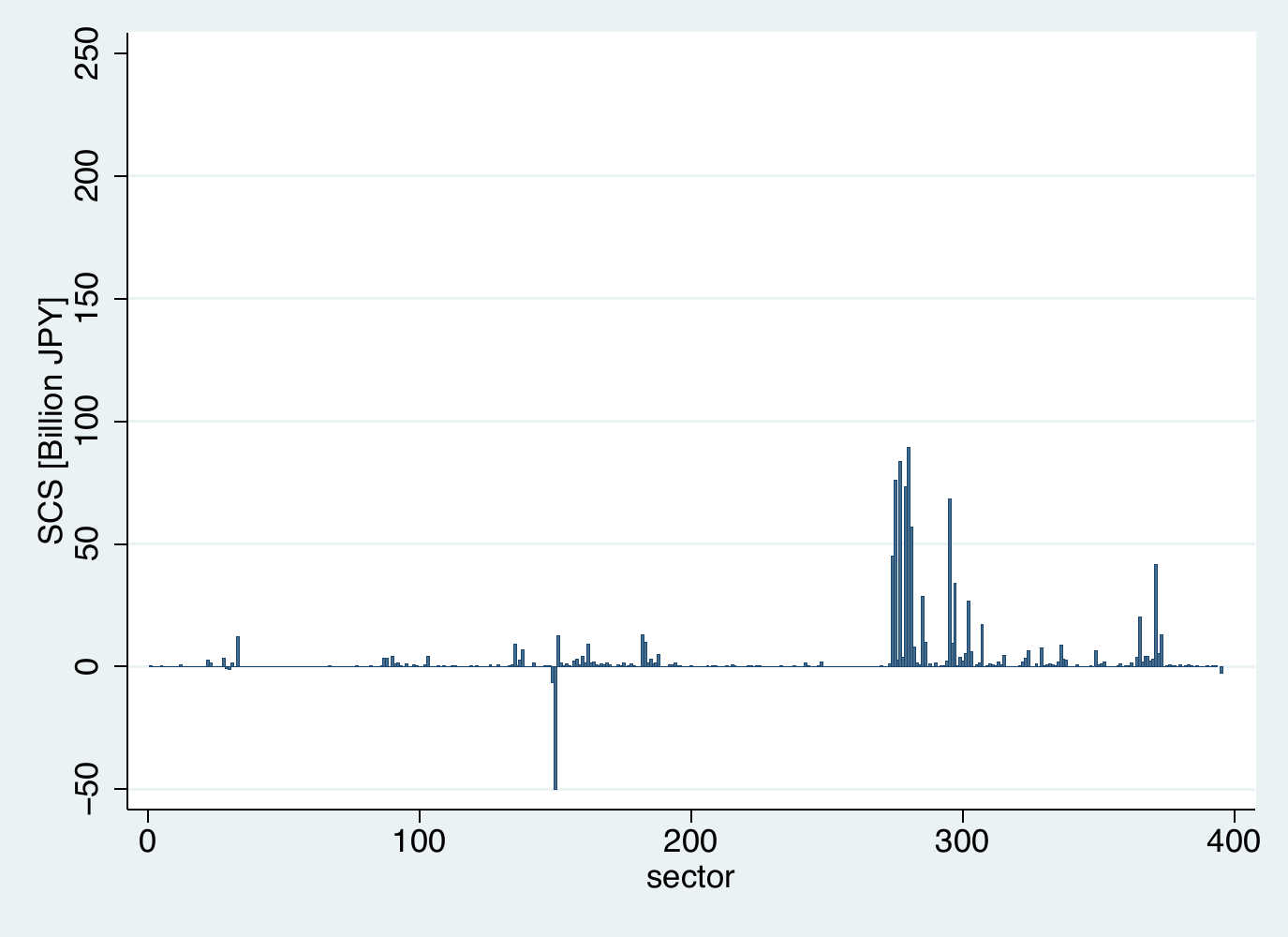}
 \caption{Sectoral distribution of SCS for productivity doubling of RMC sector (150th) for multifactor CES (all estimates) system. (Japan)} \label{scsdCESrJ}
 \end{figure}

Moreover, we observe from these figures that not only the magnitude of propagation (in terms of SCS) of the productivity stimuli will be magnified by larger elasticities of substitution, but the distribution of SCS become more even.
We have measured the ``polarity'' of the distribution of SCS over the sectors via kurtosis, displayed in parentheses in Table \ref{tab_scs}.
The primary factor will be mitigated primarily at the RMC sector where the productivity is enhanced for the Leontief system, whereas the mitigation of primary factor will spread over the sectors for the Cobb--Douglas and CES.
Put differently, the welfare gain of enhanced productivity in one industry is attained mainly as the curtailment of factor inputs of that particular industry while keeping the output level consistent, for the Leontief system, whereas for the Cobb--Douglas and CES systems the reduced price is appreciated by other industries so that their primary factors are reduced by substitution.

\begin{figure}[t!]
 \centering
  \includegraphics[scale = 0.63]{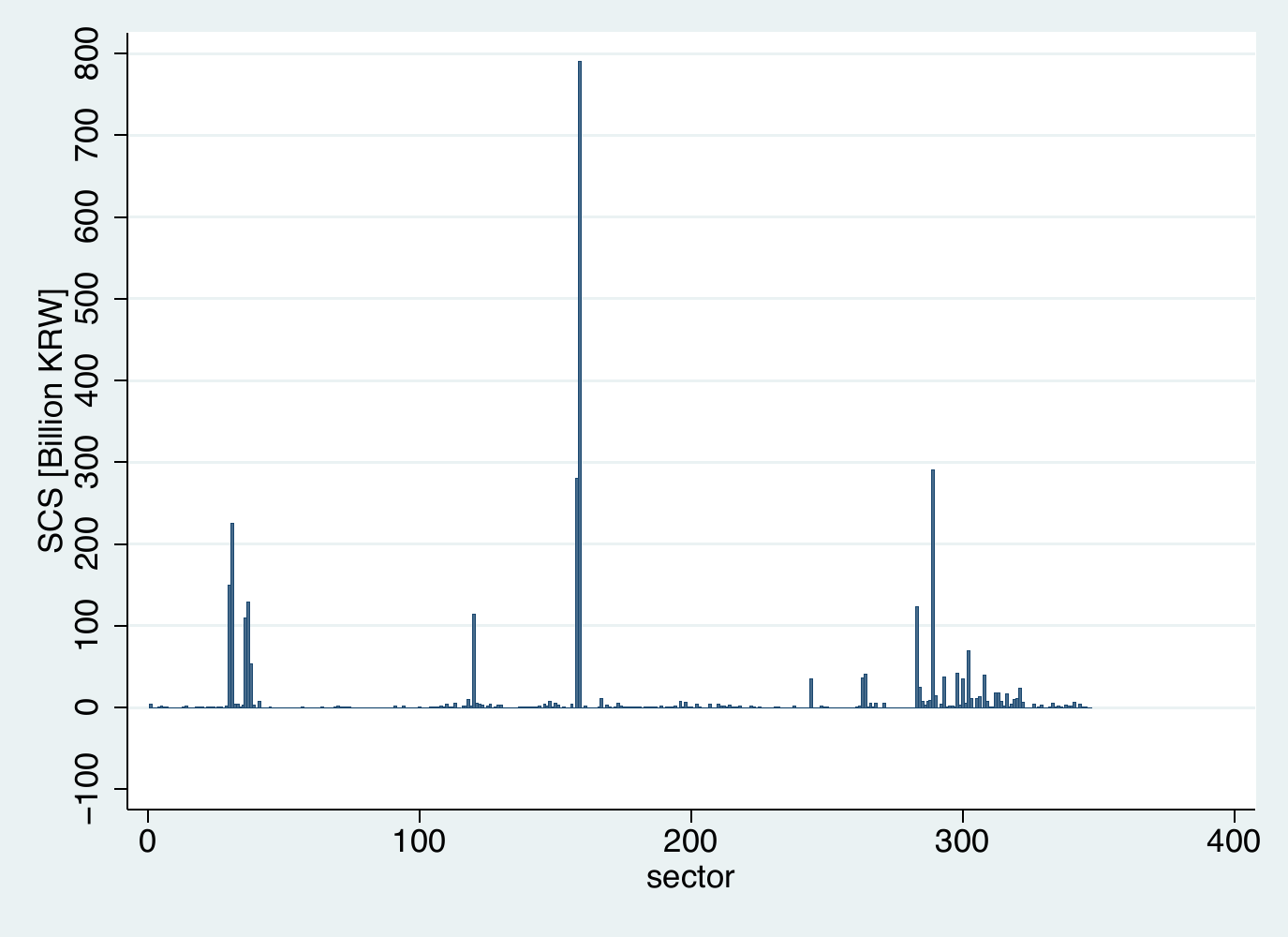}
 \caption{Sectoral distribution of SCS for productivity doubling of RMC sector (159th) for Leontief system. (Korea)} \label{scsdLK}
\end{figure}

\begin{figure}[t!]
\centering
  \includegraphics[scale = 0.63]{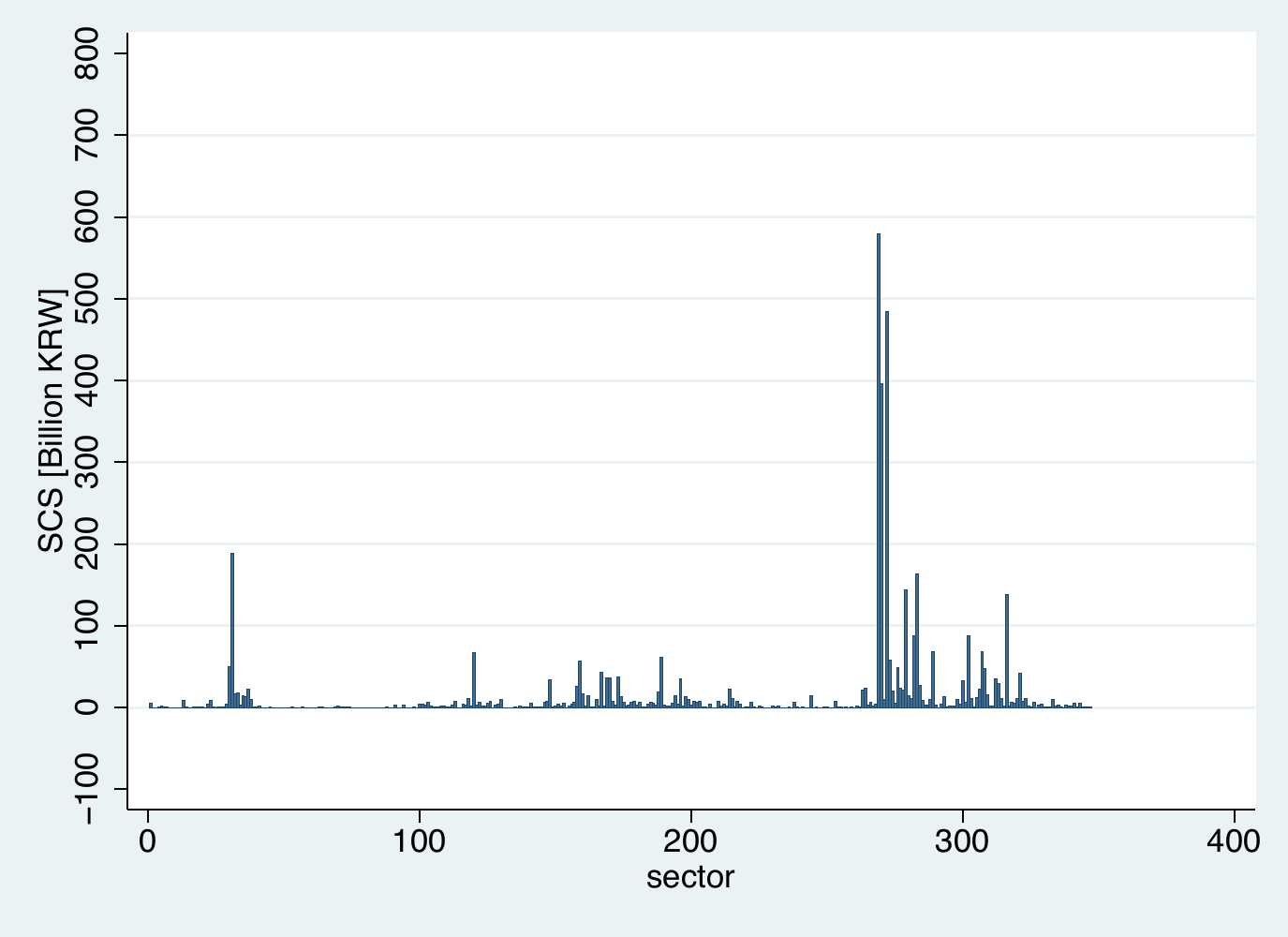}
 \caption{Sectoral distribution of SCS for productivity doubling of RMC sector (159th) for Cobb--Douglas system. (Korea)} \label{scsdCDK}
\end{figure}

\section{Concluding Remarks}
To date input--output analysis has been a one-of-a-kind framework that considers industry-wide propagation when assessing the costs and benefits of new goods and  innovations.  
Input--output analysis, nonetheless, has laid its theory upon the non-substitution theorem, which allows the researcher to study under a fixed technological structure while restricting the subjects of analyses to transformations within the final demand.  
Substitution of technology will nevertheless take place in any industry when a new technology is actually introduced into any component (industry) of the economy.
Larger influence is typically foreseeable for intermediate industries, as they have much larger and wider feedback on economy-wide systems of production.

In order to consider all technology substitution possibilities, we proposed in this study a methodology to measure the sector specific substitution elasticity for the CES production function, rather than using uniform \text{a priori} substitution elasticity (such as zeros and ones), when modeling the economy-wide multisector multifactor production system.
A dual analytical method (i.e., unit cost functions) was used to evaluate influences upon general equilibrium technological substitutions and eventually upon social costs and benefits, initiated by the introduction of innovation, which we treat as gains in productivity.
We have found that the more elastic production functions (Cobb--Douglas and CES) have more significant and wider propagation effects, whereas inelastic production functions (Leontief) have effects that are relatively less and polarized. Applications and extensions of this framework can perhaps be immense, including internationalization, dynamicalization, and quality considerations all remaining for future investigations.

\begin{figure}[t!]
\centering
  \includegraphics[scale = 0.63]{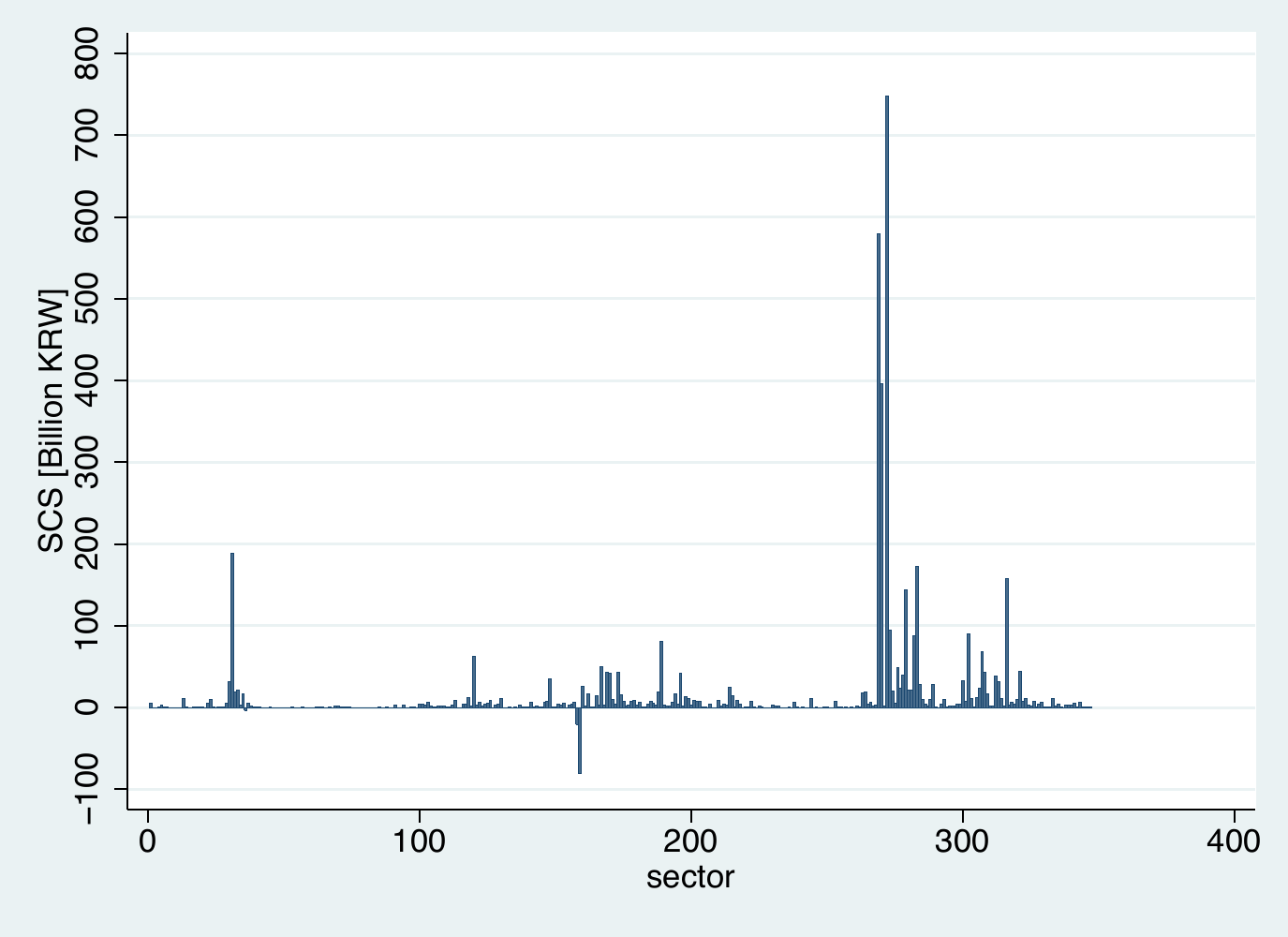}
 \caption{Sectoral distribution of SCS for productivity doubling of RMC sector (159th) for CES system. (Korea)} \label{scsdCESK}
 \end{figure}

 \begin{figure}[t!]
\centering
  \includegraphics[scale = 0.63]{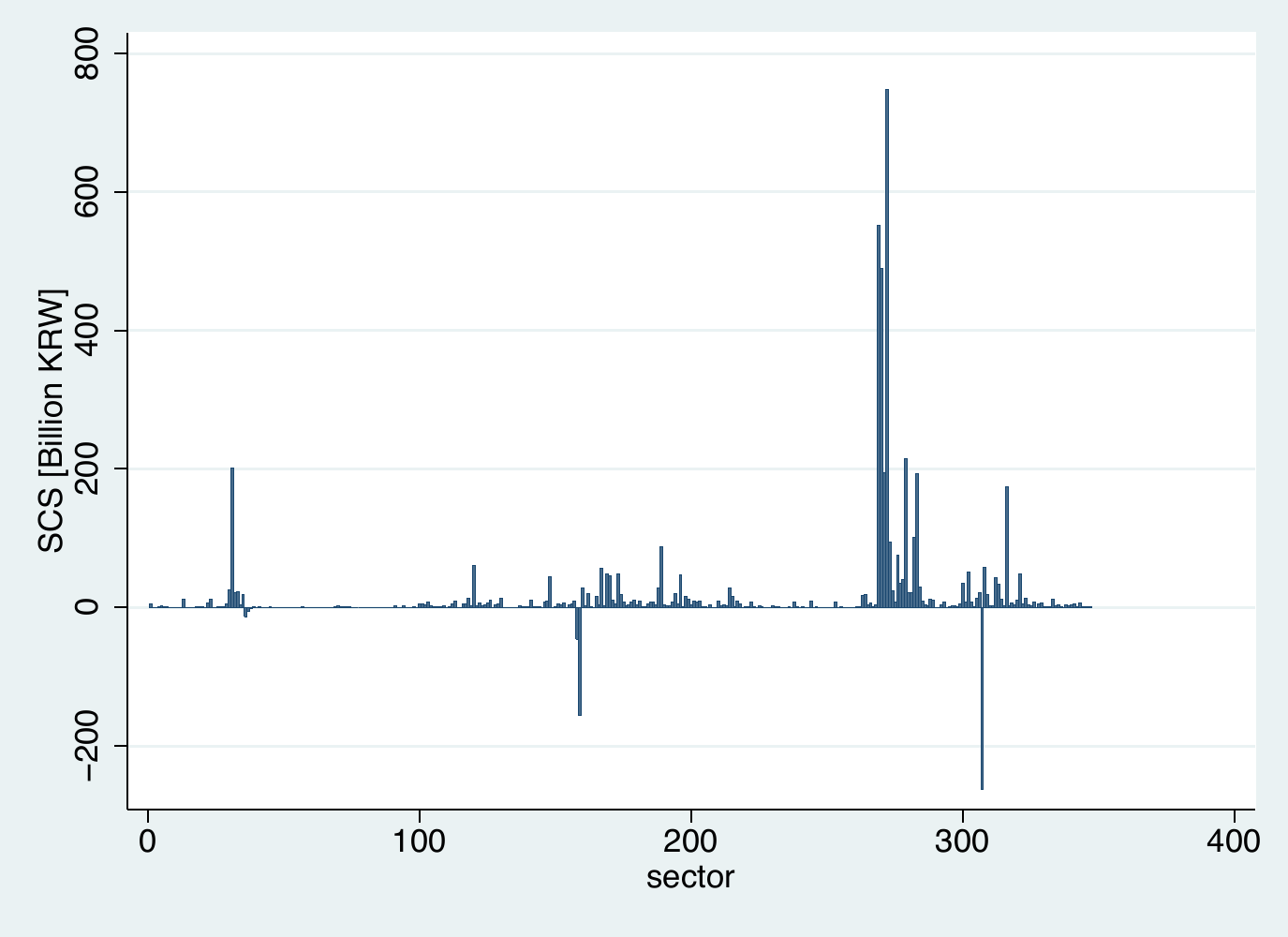}
 \caption{Sectoral distribution of SCS for productivity doubling of RMC sector (159th) for multifactor CES (all estimates) system. (Korea)} \label{scsdCESrK}
 \end{figure}

\section*{Acknowledgements}
The authors would like to thank the anonymous reviewers and the editor of this journal for helpful comments and suggestions.
This material is based upon work supported by the Japan Society for the Promotion of Science under Grant No.16K00687.

{\raggedright
\bibliography{netinno_bib}
}

\clearpage
\section*{Appendix}
\setcounter{table}{0}
\renewcommand{\thetable}{A\arabic{table}}
\newcolumntype{.}{D{.}{.}{0}}
{\tiny
\tablecaption{CES Elasticities and Productivity Growths (Japan 2000--2005)}
\label{tab_JPN}
\tablefirsthead{
\hline  
\multicolumn{1}{c}{sector}  
& \multicolumn{2}{c}{Elasticity}  
& \multicolumn{2}{c}{TFPg}    
& \multicolumn{1}{c}{Obs.} \\ 
\hline\noalign{\smallskip}
}
\tablehead{\multicolumn{6}{c}{Table Continued} \\
\hline
\multicolumn{1}{c}{sector}  
& \multicolumn{2}{c}{Elasticity}  
& \multicolumn{2}{c}{TFPg}    
& \multicolumn{1}{c}{Obs.} \\ 
\hline\noalign{\smallskip}
}
\tabletail{\hline}
\begin{xtabular}{l.l.l.}
Liquid crystal element	&	2.296	&	***	&	1.269	&	***		(***)	&	116	\\
Turbines	&	1.689	&	***	&	0.783	&	***		(***)	&	119	\\
Video recording and playback equipment	&	2.007	&	***	&	0.773	&	***		(***)	&	136	\\
Personal Computers	&	1.455	&	*	&	0.647	&				&	126	\\
Coal products	&	1.979	&	**	&	0.593	&		(***)	&	91	\\
Frozen fish and shellfish	&	2.074	&	*	&	0.449	&			(***)	&	80	\\
Electronic computing equipment (accessory equipment)	&	1.871	&	***	&	0.412	&	**		(***)	&	132	\\
Cyclic intermediates	&	1.784	&	***	&	0.367	&			(***)	&	105	\\
Fowls and broilers	&	2.199	&	*	&	0.332	&			(***)	&	57	\\
Steel ships	&	1.451	&	***	&	0.307	&	**		(***)	&	157	\\
Photographic sensitive materials	&	1.581	&	**	&	0.283	&			(**)	&	106	\\
Other business services	&	2.098	&	***	&	0.270	&			(***)	&	122	\\
Electronic computing equipment (except personal computers)	&	1.668	&	***	&	0.268	&				&	126	\\
Financial service	&	0.275	&	*	&	0.260	&			(***)	&	101	\\
Social welfare (profit-making)	&	1.268	&	***	&	0.251	&			(***)	&	143	\\
Private non-profit institutions serving households, n.e.c. *	&	1.391	&	*	&	0.242	&			(***)	&	105	\\
Repair of ships	&	1.378	&	**	&	0.239	&			(***)	&	142	\\
Inorganic pigment	&	1.581	&	**	&	0.233	&			(***)	&	104	\\
Other iron or steel products	&	1.345	&	*	&	0.231	&				&	81	\\
Public administration (central) **	&	1.603	&	***	&	0.223	&			(***)	&	219	\\
Boilers	&	1.646	&	**	&	0.217	&			(***)	&	120	\\
Aliphatic intermediates	&	1.461	&	*	&	0.214	&			(**)	&	109	\\
Household electric appliances (except air-conditioners)	&	1.333	&	**	&	0.182	&				&	153	\\
Medical service (medical corporations, etc.)	&	1.622	&	**	&	0.168	&			(**)	&	156	\\
Synthetic dyes	&	1.868	&	***	&	0.165	&			(***)	&	97	\\
Dishes, sushi and lunch boxes	&	1.761	&	**	&	0.165	&			(***)	&	116	\\
Applied electronic equipment	&	1.455	&	**	&	0.160	&			(*)	&	133	\\
Railway transport (freight)	&	1.918	&	***	&	0.154	&			(***)	&	101	\\
Noodles	&	1.669	&	**	&	0.151	&			(*)	&	108	\\
Motor vehicle parts and accessories	&	1.701	&	***	&	0.137	&	*		(**)	&	152	\\
Dextrose, syrup and isomerized sugar	&	1.405	&	**	&	0.133	&			(**)	&	78	\\
Medicaments	&	1.976	&	*	&	0.132	&				&	135	\\
Electric bulbs	&	1.570	&	**	&	0.125	&			(*)	&	103	\\
Other electrical devices and parts	&	2.059	&	***	&	0.121	&				&	125	\\
Other general industrial machinery and equipment	&	1.386	&	*	&	0.116	&				&	140	\\
Other industrial organic chemicals	&	1.687	&	*	&	0.115	&				&	118	\\
Metal containers, fabricated plate and sheet metal	&	1.780	&	***	&	0.104	&	**		(**)	&	134	\\
Metallic furniture and fixture	&	1.775	&	**	&	0.103	&				&	124	\\
Nursing care (In-facility)	&	1.585	&	***	&	0.101	&			(**)	&	159	\\
Semiconductor making equipment	&	1.453	&	**	&	0.099	&				&	142	\\
Marine culture	&	1.717	&	**	&	0.092	&				&	92	\\
Other metal products	&	1.774	&	***	&	0.087	&			(*)	&	145	\\
Bearings	&	1.627	&	***	&	0.086	&				&	114	\\
Pumps and compressors	&	2.111	&	***	&	0.085	&			(**)	&	129	\\
Wheat, barley and the like	&	2.952	&	*	&	0.081	&				&	60	\\
Confectionery	&	1.807	&	***	&	0.080	&				&	121	\\
Other educational and training institutions (profit-making)	&	1.748	&	**	&	0.079	&				&	74	\\
Sporting and athletic goods	&	1.578	&	**	&	0.077	&				&	135	\\
Cosmetics, toilet preparations and dentifrices	&	1.576	&	*	&	0.074	&				&	105	\\
Tires and inner tubes	&	1.517	&	*	&	0.072	&				&	102	\\
Miscellaneous manufacturing products	&	1.622	&	***	&	0.071	&				&	180	\\
Gas and oil appliances and heating and cooking apparatus	&	1.568	&	***	&	0.069	&				&	133	\\
Agricultural public construction	&	2.039	&	*	&	0.062	&				&	144	\\
Health and hygiene (profit-making)	&	1.509	&	**	&	0.059	&				&	94	\\
Plumber's supplies, powder metallurgy products and tools	&	1.596	&	***	&	0.057	&				&	128	\\
Internal combustion engines for vessels	&	1.808	&	**	&	0.057	&				&	115	\\
Other rubber products	&	1.740	&	***	&	0.052	&				&	125	\\
Electric wires and cables	&	1.566	&	***	&	0.051	&				&	121	\\
Other final chemical products	&	1.782	&	***	&	0.048	&				&	150	\\
Activities not elsewhere classified	&	3.575	&	***	&	0.047	&				&	179	\\
Paint and varnishes	&	1.703	&	***	&	0.047	&				&	125	\\
Oil and fat industrial chemicals	&	1.555	&	*	&	0.047	&				&	91	\\
Compressed gas and liquefied gas	&	1.593	&	*	&	0.041	&				&	81	\\
Metal products for construction	&	1.497	&	**	&	0.040	&				&	136	\\
Other pulp, paper and processed paper products	&	1.517	&	**	&	0.035	&				&	125	\\
Metal molds	&	1.894	&	***	&	0.035	&				&	127	\\
Health and hygiene (public) **	&	1.496	&	***	&	0.033	&				&	91	\\
Machinery for agricultural use	&	1.576	&	**	&	0.030	&				&	142	\\
Publication	&	1.470	&	*	&	0.029	&				&	105	\\
Other special machinery for industrial use 	&	1.646	&	**	&	0.026	&				&	146	\\
Other industrial inorganic chemicals	&	1.643	&	**	&	0.026	&				&	116	\\
Abrasive	&	1.363	&	*	&	0.025	&				&	126	\\
Other services relating to communication	&	2.444	&	***	&	0.019	&				&	65	\\
Advertising services	&	1.964	&	***	&	0.018	&				&	103	\\
Electron tubes	&	1.825	&	***	&	0.018	&				&	116	\\
Retort foods	&	1.543	&	*	&	0.012	&				&	92	\\
Chemical fertilizer	&	1.608	&	*	&	0.012	&				&	113	\\
Internal combustion engines for motor vehicles and parts	&	1.803	&	***	&	0.010	&				&	131	\\
Other structural clay products	&	1.485	&	**	&	0.010	&				&	107	\\
Newspaper	&	1.529	&	**	&	0.007	&				&	99	\\
Wooden furniture and fixtures	&	2.086	&	***	&	0.004	&				&	145	\\
Coated steel	&	1.981	&	***	&	0.004	&				&	100	\\
Miscellaneous ceramic, stone and clay products	&	1.455	&	***	&	0.004	&				&	147	\\
Cement	&	1.577	&	**	&	0.000	&				&	103	\\
Glass fiber and glass fiber products, n.e.c.	&	1.774	&	***	&	-0.002	&				&	106	\\
Conveyors	&	1.408	&	**	&	-0.005	&				&	138	\\
Fisheries	&	1.648	&	***	&	-0.011	&				&	92	\\
Other general machines and parts	&	1.644	&	***	&	-0.013	&				&	143	\\
Sewage disposal **	&	1.734	&	***	&	-0.013	&				&	86	\\
Other photographic and optical instruments	&	0.423	&	**	&	-0.014	&				&	127	\\
Bread	&	1.664	&	**	&	-0.015	&				&	111	\\
Office supplies	&	2.608	&	***	&	-0.015	&				&	29	\\
Wiring devices and supplies	&	1.784	&	***	&	-0.019	&				&	128	\\
Electrical equipment for internal combustion engines	&	1.483	&	**	&	-0.021	&				&	130	\\
Medical service (non-profit foundations, etc.)	&	1.812	&	***	&	-0.021	&				&	154	\\
Clay refractories	&	1.656	&	***	&	-0.022	&				&	109	\\
Cast and forged materials (iron)	&	2.091	&	***	&	-0.026	&				&	133	\\
Engines	&	1.859	&	***	&	-0.026	&				&	129	\\
Pulp	&	2.634	&	**	&	-0.028	&				&	104	\\
Non-ferrous metal castings and forgings	&	1.615	&	**	&	-0.034	&				&	123	\\
Other wooden products	&	1.716	&	***	&	-0.035	&				&	160	\\
Railway transport (passengers)	&	2.086	&	***	&	-0.040	&				&	112	\\
Sugar	&	1.492	&	**	&	-0.044	&				&	83	\\
News syndicates and private detective agencies	&	1.434	&	*	&	-0.045	&				&	74	\\
Other electronic components	&	1.746	&	***	&	-0.049	&				&	152	\\
Electricity	&	1.476	&	*	&	-0.052	&				&	98	\\
Medical instruments	&	0.090	&	***	&	-0.052	&				&	151	\\
Repair of motor vehicles	&	1.442	&	*	&	-0.052	&				&	114	\\
Repair of rolling stock	&	1.712	&	***	&	-0.052	&				&	117	\\
Other glass products	&	2.006	&	***	&	-0.060	&				&	107	\\
Bolts, nuts, rivets and springs	&	1.763	&	***	&	-0.060	&				&	132	\\
Rolled and drawn aluminum	&	1.824	&	*	&	-0.063	&				&	86	\\
Synthetic fibers	&	1.636	&	*	&	-0.065	&				&	99	\\
Woven fabric apparel	&	1.577	&	*	&	-0.065	&				&	101	\\
Whiskey and brandy	&	2.601	&	*	&	-0.071	&				&	88	\\
Social welfare (private, non-profit) *	&	1.460	&	***	&	-0.072	&				&	143	\\
Knitted apparel	&	2.031	&	*	&	-0.084	&				&	107	\\
Accommodations	&	1.825	&	***	&	-0.084	&			(**)	&	161	\\
Medical service (public)	&	1.808	&	***	&	-0.087	&			(**)	&	153	\\
Other transport equipment	&	1.973	&	***	&	-0.089	&				&	140	\\
Pottery, china and earthenware	&	2.073	&	***	&	-0.089	&			(*)	&	119	\\
Fiber yarns	&	1.851	&	**	&	-0.094	&				&	94	\\
Plastic footwear	&	1.965	&	***	&	-0.095	&	**		(**)	&	108	\\
Nursing care (In-home)	&	1.552	&	***	&	-0.095	&			(**)	&	153	\\
Transformers and reactors	&	1.600	&	**	&	-0.102	&				&	124	\\
Cast iron pipes and tubes	&	1.805	&	**	&	-0.102	&				&	90	\\
Cleaning	&	1.655	&	**	&	-0.103	&			(*)	&	88	\\
Aircrafts	&	1.684	&	**	&	-0.103	&				&	121	\\
Food processing machinery and equipment	&	1.562	&	**	&	-0.116	&			(*)	&	124	\\
Industrial robots	&	1.520	&	**	&	-0.117	&				&	124	\\
Beauty shops	&	1.459	&	*	&	-0.126	&				&	91	\\
Plywood	&	1.713	&	**	&	-0.126	&	*			&	86	\\
Passenger motor cars	&	1.703	&	**	&	-0.135	&			(*)	&	123	\\
Audio and video records, other information recording media	&	1.488	&	*	&	-0.135	&			(*)	&	95	\\
Motor vehicle bodies	&	1.592	&	*	&	-0.139	&				&	125	\\
Barber shops	&	1.657	&	***	&	-0.148	&			(***)	&	86	\\
Repair of machine	&	1.622	&	**	&	-0.153	&			(*)	&	145	\\
Plasticizers	&	2.262	&	***	&	-0.153	&	*		(***)	&	84	\\
Other personal services	&	1.925	&	*	&	-0.155	&			(**)	&	113	\\
Rolled and drawn copper and copper alloys	&	1.829	&	**	&	-0.166	&				&	83	\\
Textile machinery	&	2.218	&	***	&	-0.169	&	*		(***)	&	138	\\
Rotating electrical equipment	&	1.457	&	**	&	-0.172	&			(**)	&	127	\\
Chemical machinery	&	1.528	&	**	&	-0.176	&			(**)	&	132	\\
Public baths	&	1.544	&	*	&	-0.188	&			(**)	&	94	\\
Metal processing machinery	&	1.654	&	***	&	-0.192	&	**		(***)	&	128	\\
Petrochemical basic products	&	1.798	&	*	&	-0.200	&				&	89	\\
Image information production and distribution industry	&	1.678	&	**	&	-0.201	&			(**)	&	119	\\
Social welfare (public) **	&	1.479	&	**	&	-0.201	&			(***)	&	142	\\
Hot rolled steel	&	2.138	&	***	&	-0.207	&				&	97	\\
Crops for feed and forage	&	2.988	&	***	&	-0.207	&	***		(***)	&	58	\\
Crude steel (electric furnaces)	&	1.870	&	**	&	-0.226	&				&	96	\\
Machinery for service industry	&	1.378	&	**	&	-0.233	&			(**)	&	129	\\
Social education (public) **	&	1.812	&	*	&	-0.238	&			(***)	&	93	\\
Consigned freight forwarding	&	-0.732	&	**	&	-0.239	&			(*)	&	93	\\
Wired communication equipment	&	2.164	&	***	&	-0.243	&	*		(***)	&	150	\\
Other electrical devices and parts	&	1.388	&	**	&	-0.246	&			(***)	&	142	\\
Iron and steel shearing and slitting	&	2.379	&	***	&	-0.265	&			(*)	&	83	\\
Other wearing apparel and clothing accessories	&	1.800	&	*	&	-0.270	&			(***)	&	109	\\
Coal mining , crude petroleum and natural gas	&	1.850	&	***	&	-0.277	&	***		(***)	&	89	\\
Rolling stock	&	1.808	&	***	&	-0.284	&	***		(***)	&	138	\\
Research and development (intra-enterprise)	&	1.461	&	**	&	-0.317	&	*		(***)	&	126	\\
Batteries	&	1.640	&	**	&	-0.317	&			(***)	&	129	\\
Watches and clocks	&	1.471	&	***	&	-0.339	&			(***)	&	121	\\
Wooden chips	&	1.626	&	*	&	-0.350	&			(***)	&	64	\\
Optical fiber cables	&	1.634	&	**	&	-0.360	&			(***)	&	115	\\
Crude steel (converters)	&	2.635	&	***	&	-0.377	&	**		(***)	&	99	\\
Electric measuring instruments	&	1.362	&	*	&	-0.399	&			(***)	&	128	\\
Storage facility service	&	1.602	&	**	&	-0.404	&			(***)	&	105	\\
Copper	&	2.110	&	**	&	-0.448	&				&	77	\\
Private non-profit institutions serving enterprises	&	1.586	&	*	&	-0.450	&			(***)	&	91	\\
Other non-ferrous metal products	&	2.152	&	**	&	-0.549	&	***		(**)	&	88	\\
Pig iron	&	1.600	&	**	&	-0.680	&	*		(*)	&	169	\\
Research institutes for natural science (pubic) **	&	2.090	&	*	&	-0.745	&	*		(***)	&	90	\\
Metallic ores	&	1.634	&	***	&	-0.799	&	***		(***)	&	82	\\
Ferro alloys	&	1.652	&	*	&	-0.823	&				&	85	\\
Research institutes for natural sciences (profit-making)	&	2.108	&	**	&	-0.855	&			(***)	&	93	\\
\hline
\end{xtabular}
\\\\
\normalsize
Note: The statistical significances in parenthesis are of the intercept of the regression (\ref{main}).
}

\clearpage
{\tiny
\tablecaption{CES Elasticities and Productivity Growths (Korea 2000--2005)}
\label{tab_KOR}
\tablefirsthead{
\hline
\multicolumn{1}{c}{sector}  
& \multicolumn{2}{c}{Elasticity}  
& \multicolumn{2}{c}{TFPg}    
& \multicolumn{1}{c}{Obs.} \\ 
\hline\noalign{\smallskip}
}
\tablehead{\multicolumn{6}{c}{Table Continued} \\
\hline
\multicolumn{1}{c}{sector}  
& \multicolumn{2}{c}{Elasticity}  
& \multicolumn{2}{c}{TFPg}  
& \multicolumn{1}{c}{Obs.} \\ 
\hline\noalign{\smallskip}
}
\tabletail{\hline}
\begin{xtabular}{l.l.l.}
Photographic and optical instruments	&	2.116	&	***	&	0.688	&	***		(***)	&	165	\\
Computer and peripheral equipment	&	1.660	&	**	&	0.619	&			(*)	&	166	\\
Watches and clocks	&	1.615	&	**	&	0.618	&			(***)	&	147	\\
Electric resistors and storage batteries	&	2.033	&	***	&	0.582	&	***		(***)	&	156	\\
Research institutes(private, non-profit, commercial)	&	1.498	&	*	&	0.566	&			(***)	&	152	\\
Electric household audio equipment	&	2.141	&	***	&	0.564	&	*		(***)	&	151	\\
Misc. amusement and recreation services	&	1.817	&	***	&	0.514	&	***		(***)	&	153	\\
Supporting land transport activities	&	1.555	&	**	&	0.507	&			(***)	&	126	\\
Wood furniture	&	1.495	&	*	&	0.447	&			(***)	&	165	\\
Education (commercial)	&	1.682	&	**	&	0.417	&			(***)	&	127	\\
Other audio and visual equipment	&	1.614	&	*	&	0.402	&			(*)	&	164	\\
Bicycles and parts and misc. transportation equipment	&	1.860	&	***	&	0.400	&	***		(***)	&	132	\\
Household laundry equipment	&	1.480	&	**	&	0.399	&			(***)	&	145	\\
Electron tubes	&	1.709	&	***	&	0.393	&	***		(**)	&	159	\\
Semiconductor devices	&	1.542	&	**	&	0.371	&				&	162	\\
Road freight transport	&	1.961	&	**	&	0.370	&			(***)	&	131	\\
Printed circuit boards	&	1.550	&	**	&	0.357	&	*		(*)	&	160	\\
Section steel	&	1.520	&	**	&	0.340	&			(*)	&	121	\\
Supporting air transport activities	&	2.164	&	***	&	0.339	&	**		(***)	&	108	\\
Business and professional organizations	&	2.735	&	***	&	0.335	&			(***)	&	95	\\
Passenger automobiles	&	1.674	&	***	&	0.334	&	**		(***)	&	155	\\
Office machines and devices	&	1.536	&	*	&	0.332	&			(**)	&	154	\\
Industrial glass products	&	2.121	&	***	&	0.292	&	***		(**)	&	169	\\
Central bank and banking institutions, Non-bank depository institutions	&	1.864	&	**	&	0.287	&			(***)	&	120	\\
Water supply	&	1.675	&	**	&	0.285	&			(**)	&	124	\\
Road passenger transport	&	1.983	&	***	&	0.285	&			(**)	&	131	\\
Clay products for construction	&	1.800	&	**	&	0.285	&			(**)	&	140	\\
Lime, gypsum, and plaster products	&	1.813	&	*	&	0.282	&			(***)	&	134	\\
Food processing machinery	&	1.592	&	**	&	0.278	&	**		(***)	&	143	\\
Boiler, Heating apparatus and cooking appliances	&	1.610	&	*	&	0.274	&			(**)	&	164	\\
Pulp	&	1.526	&	*	&	0.273	&				&	112	\\
Medical instruments and supplies	&	1.793	&	***	&	0.271	&			(**)	&	167	\\
Regulators and Measuring and analytical instruments	&	1.603	&	**	&	0.266	&			(**)	&	167	\\
Coastal and inland water transport	&	1.552	&	**	&	0.265	&			(***)	&	134	\\
Leather	&	1.831	&	**	&	0.260	&	*		(**)	&	129	\\
Cosmetics and dentifrices	&	1.974	&	**	&	0.255	&	*		(**)	&	165	\\
Non-life insurance	&	1.586	&	*	&	0.250	&			(*)	&	107	\\
Misc. chemical products	&	1.589	&	**	&	0.245	&			(**)	&	172	\\
Sports organizations and sports facility operation	&	1.635	&	***	&	0.241	&			(**)	&	144	\\
Social work activities(other)	&	1.757	&	**	&	0.229	&			(**)	&	137	\\
Trucks and Motor vehicles with special equipment	&	1.845	&	***	&	0.229	&	***		(***)	&	154	\\
Other membership organizations	&	1.855	&	**	&	0.225	&			(**)	&	114	\\
Wooden containers and Other wooden products	&	2.034	&	**	&	0.224	&			(**)	&	124	\\
Bakery and confectionery products	&	1.819	&	*	&	0.213	&			(**)	&	174	\\
Household refrigerators	&	1.795	&	***	&	0.213	&	**		(***)	&	152	\\
Asbestos and mineral wool products	&	1.754	&	**	&	0.212	&			(**)	&	145	\\
Air-conditioning equipment and industrial refrigeration equipment	&	1.524	&	**	&	0.209	&				&	163	\\
Buses and vans	&	1.736	&	***	&	0.208	&			(***)	&	152	\\
Medicaments	&	1.998	&	***	&	0.207	&	***		(**)	&	175	\\
Textile machinery	&	1.468	&	*	&	0.199	&				&	165	\\
Silk and hempen fabrics	&	1.982	&	**	&	0.196	&	*			&	110	\\
Printing ink	&	2.049	&	***	&	0.190	&	**		(***)	&	127	\\
Motors and generators	&	1.731	&	***	&	0.187	&	**		(**)	&	161	\\
Misc. non-metallic minerals	&	2.262	&	***	&	0.185	&				&	108	\\
Sanitary services(public)	&	1.701	&	**	&	0.185	&				&	130	\\
Concrete blocks, bricks, and other concrete products	&	1.891	&	***	&	0.182	&	**		(***)	&	144	\\
Lubricants	&	1.736	&	*	&	0.180	&				&	131	\\
Pottery	&	1.560	&	*	&	0.177	&				&	155	\\
Railroad vehicles and parts	&	1.537	&	**	&	0.174	&				&	157	\\
Metal molds and industrial patterns	&	1.662	&	**	&	0.169	&				&	152	\\
Luggage and handbags	&	2.172	&	***	&	0.161	&	**		(***)	&	118	\\
Pens, pencils, and other artists' materials	&	1.794	&	***	&	0.160	&	**		(*)	&	145	\\
Motion picture, Theatrical producers, bands, and entertainers	&	1.619	&	***	&	0.158	&			(*)	&	151	\\
Dairy products	&	1.971	&	**	&	0.157	&			(*)	&	144	\\
Publishing	&	1.473	&	*	&	0.154	&				&	124	\\
Ship repairing and ship parts	&	1.799	&	***	&	0.154	&	**		(**)	&	151	\\
Misc. nonmetallic minerals products	&	1.680	&	*	&	0.152	&				&	140	\\
Household glass products and others	&	1.940	&	***	&	0.143	&	**		(*)	&	136	\\
Agricultural implements and machinery	&	1.620	&	***	&	0.129	&				&	155	\\
Social work activities(public)	&	2.169	&	***	&	0.124	&				&	121	\\
Reproduction of recorded media	&	1.987	&	***	&	0.123	&	*		(**)	&	136	\\
Anthracite	&	2.325	&	***	&	0.122	&				&	132	\\
Paints, varnishes, and allied products	&	1.700	&	**	&	0.118	&				&	155	\\
Line telecommunication apparatuses	&	1.636	&	**	&	0.118	&				&	161	\\
Leather wearing apparels	&	1.845	&	*	&	0.116	&				&	108	\\
Library, museum and similar recreation related services(public)	&	1.843	&	***	&	0.112	&				&	133	\\
Paper containers	&	1.927	&	***	&	0.107	&				&	132	\\
Knitted clothing accessories	&	2.204	&	**	&	0.100	&				&	116	\\
Synthetic fiber fabrics	&	1.852	&	**	&	0.097	&				&	128	\\
Motorcycles and parts	&	1.687	&	**	&	0.095	&				&	148	\\
Accommodation	&	1.657	&	**	&	0.094	&				&	132	\\
Ginseng products	&	1.686	&	*	&	0.089	&				&	104	\\
Sheet glass and primary glass products	&	1.985	&	***	&	0.088	&				&	129	\\
Electric transformers	&	1.851	&	***	&	0.087	&				&	150	\\
Salted, dried and smoked seafoods	&	3.290	&	*	&	0.084	&				&	98	\\
Misc. electric equipment and supplies	&	1.503	&	*	&	0.082	&				&	155	\\
Printing	&	1.579	&	***	&	0.081	&				&	143	\\
Abrasives	&	1.710	&	**	&	0.074	&				&	142	\\
Cement	&	2.086	&	***	&	0.070	&				&	154	\\
Prepared livestock feeds	&	1.713	&	*	&	0.069	&				&	154	\\
Library, museum and similar recreation related services(other)	&	1.578	&	*	&	0.066	&				&	135	\\
Knitted fabrics	&	1.928	&	**	&	0.064	&				&	111	\\
Internal combustion engines and turbines	&	1.649	&	***	&	0.063	&				&	156	\\
Fiber bleaching and dyeing	&	1.949	&	**	&	0.058	&				&	119	\\
Cleaning and disinfection services	&	1.552	&	*	&	0.058	&				&	104	\\
Other paper products	&	1.597	&	*	&	0.054	&				&	160	\\
Other raw paper and paperboard	&	1.808	&	***	&	0.043	&				&	150	\\
Petrochemical intermediate products and Other basic organic chemicals	&	1.876	&	**	&	0.042	&				&	163	\\
Fastening metal products	&	1.661	&	**	&	0.038	&				&	137	\\
Household articles of plastic material	&	1.721	&	**	&	0.032	&				&	124	\\
Stationery paper and office paper	&	1.497	&	*	&	0.032	&				&	125	\\
Recording media and Photographic chemical products	&	1.853	&	***	&	0.031	&				&	142	\\
Medical and health services (commercial)	&	2.288	&	***	&	0.030	&				&	160	\\
Ready mixed concrete	&	2.040	&	***	&	0.030	&				&	132	\\
Supporting water transport activities	&	1.637	&	**	&	0.029	&				&	125	\\
Other leather products	&	1.858	&	*	&	0.028	&				&	91	\\
Construction and mining machinery	&	1.577	&	**	&	0.025	&				&	156	\\
Nitrogen compounds	&	1.759	&	**	&	0.025	&				&	114	\\
Road construction	&	1.389	&	*	&	0.023	&				&	179	\\
Metal products for construction	&	1.828	&	**	&	0.019	&				&	134	\\
Industrial plastic products	&	1.674	&	**	&	0.014	&				&	167	\\
Land clearing and reclamation, and irrigation project construction	&	1.539	&	**	&	0.009	&				&	167	\\
Soy sauce ad bean paste	&	1.750	&	*	&	0.008	&				&	127	\\
Communications line construction	&	1.585	&	**	&	0.006	&				&	159	\\
Metal furniture	&	1.565	&	**	&	0.006	&				&	146	\\
Thread and other fiber yarns	&	1.915	&	***	&	-0.004	&				&	114	\\
Life insurance	&	1.627	&	*	&	-0.005	&				&	106	\\
Capacitors and rectifiers, Electric transmission and distribution equipment	&	1.583	&	***	&	-0.005	&				&	167	\\
Musical instruments	&	1.506	&	**	&	-0.005	&				&	155	\\
Iron foundries and foundry iron pipe and tubes	&	1.840	&	***	&	-0.006	&				&	152	\\
Misc. petroleum refinery products	&	1.793	&	*	&	-0.011	&				&	127	\\
Medical and health services(public)	&	2.180	&	***	&	-0.011	&				&	138	\\
Pumps and compressors	&	1.601	&	**	&	-0.018	&				&	158	\\
Adhesives, gelatin and sealants	&	1.882	&	**	&	-0.021	&				&	143	\\
Rubber products	&	1.763	&	***	&	-0.022	&				&	154	\\
Canned or cured fruits and vegetables	&	1.761	&	*	&	-0.034	&				&	139	\\
Corrugated paper and solid fiber boxes	&	1.662	&	**	&	-0.040	&				&	119	\\
Crushed and broken stone abd Other bulk stones	&	1.787	&	*	&	-0.044	&				&	120	\\
Railroad construction	&	1.432	&	*	&	-0.045	&				&	170	\\
Medical and health services(Private, non-profit)	&	1.946	&	***	&	-0.046	&				&	141	\\
Architectural engineering services	&	1.606	&	**	&	-0.048	&				&	143	\\
Newspapers	&	1.873	&	***	&	-0.049	&				&	118	\\
Sporting and athletic goods	&	1.720	&	*	&	-0.058	&				&	159	\\
Treatment and coating of metals and Misc. fabricated metal products	&	1.722	&	**	&	-0.060	&				&	171	\\
Synthetic fiber yarn	&	1.903	&	**	&	-0.067	&				&	124	\\
Plywood	&	1.769	&	*	&	-0.067	&				&	122	\\
Electric lamps and electric lighting fixtures	&	1.575	&	**	&	-0.068	&				&	160	\\
Synthetic fibers	&	1.701	&	*	&	-0.073	&				&	128	\\
Research institutes(public)	&	1.611	&	**	&	-0.080	&				&	182	\\
Services related to real estate	&	2.091	&	**	&	-0.080	&				&	91	\\
Lumber	&	2.081	&	**	&	-0.080	&				&	105	\\
Insulated wires and cables	&	1.777	&	***	&	-0.089	&				&	169	\\
Other nonferrous metal ingots	&	1.697	&	*	&	-0.097	&				&	121	\\
Personal services	&	1.977	&	***	&	-0.110	&			(*)	&	124	\\
Conveyors and conveying equipment	&	1.649	&	**	&	-0.110	&				&	165	\\
Electric power plant construction	&	1.334	&	*	&	-0.125	&				&	171	\\
Starches	&	2.220	&	**	&	-0.137	&			(*)	&	102	\\
Footwear	&	1.836	&	***	&	-0.139	&	*		(*)	&	131	\\
Other edible crops	&	2.586	&	**	&	-0.139	&				&	58	\\
Explosives and fireworks products	&	1.637	&	**	&	-0.156	&				&	139	\\
Wooden products for construction	&	1.953	&	***	&	-0.164	&	**		(**)	&	114	\\
Bolts, nuts, screws, rivets, and washers	&	1.688	&	**	&	-0.168	&	*			&	139	\\
Pig iron	&	1.922	&	***	&	-0.171	&				&	138	\\
Railroad passenger transport	&	2.544	&	***	&	-0.181	&	**		(*)	&	135	\\
Gold and silver ingots	&	2.860	&	***	&	-0.186	&			(*)	&	112	\\
Sand and gravel	&	2.520	&	**	&	-0.201	&				&	113	\\
Steel ships	&	1.549	&	**	&	-0.203	&				&	181	\\
Telecommunications	&	1.623	&	*	&	-0.213	&				&	123	\\
Other personal repair services	&	1.925	&	***	&	-0.225	&	***		(***)	&	147	\\
Education (public)	&	1.936	&	***	&	-0.231	&	***		(**)	&	169	\\
Gasoline and Jet oil	&	1.698	&	**	&	-0.234	&				&	127	\\
Other ships	&	1.888	&	***	&	-0.287	&	***		(***)	&	166	\\
Forgings	&	2.125	&	***	&	-0.289	&	***		(**)	&	122	\\
Cargo handling	&	1.861	&	**	&	-0.373	&			(***)	&	122	\\
Research and experiment in enterprise	&	1.415	&	**	&	-0.502	&			(***)	&	225	\\
Education (private, non-profit)	&	1.525	&	*	&	-0.509	&			(***)	&	148	\\
\hline
\end{xtabular}
\\\\
\normalsize
Note: The statistical significances in parenthesis are of the intercept of the regression (\ref{main}).
}

\end{document}